\documentclass[10pt, sigconf, nonacm]{acmart}
\usepackage[utf8]{inputenc}
\usepackage[english]{babel}

\renewcommand\footnotetextcopyrightpermission[1]{}

\usepackage{url}

\usepackage{graphicx}
\usepackage[utf8]{inputenc}

\usepackage{comment}
\usepackage{algorithm}

\usepackage{algpseudocode}

\usepackage{amsfonts} 
\usepackage{array}
\usepackage{float}
\usepackage{caption}
\usepackage{subcaption}

\usepackage{enumitem}

\usepackage{xcolor}

\usepackage{xspace}
\newcommand{\Pipelet}[0]{\textsc{Pipelet}\xspace}

\newcommand{\Timeout}[0]{\textsc{Timeout}\xspace}
\newcommand{\Votes}[0]{\textsc{Votes}\xspace}
\newcommand{\Sync}[0]{\textsc{Sync}\xspace}

\newcommand{\Vote}[0]{\textsc{Vote}\xspace}
\newcommand{\Proposal}[0]{\textsc{Proposal}\xspace}
\newcommand{\Signatures}[0]{\textsc{Signatures}\xspace}
\newcommand{\Signature}[0]{\textsc{Signature}\xspace}

\newcommand{\chain}[0]{\ensuremath{\mathcal{C}}}

\usepackage[capitalise]{cleveref}

\usepackage[textsize=tiny, textwidth=1.1cm, disable]{todonotes}

\newcommand{\vivek}[1]{\todo[color=green!10, linecolor=green!50!black]{\textbf{vivek:} #1}}

\newtheorem{theorem}{Theorem}

\newtheorem{definition}{Definition}

\pagenumbering{arabic}
\settopmatter{printfolios=false, printacmref=false}

\begin{document}
\pagestyle{plain}

\title{Pipelet: Practical Streamlined Blockchain Protocol}
\author{Vivek Karihaloo}
\author{Ruchi Shah}
\author{Panruo Wu}
\affiliation{University of Houston}
\author{Aron Laszka}
\affiliation{Pennsylvania State University}

\begin{abstract}
Fueled by the growing popularity of proof-of-stake blockchains, there has been increasing interest and progress in permissioned consensus protocols, which could provide a simpler alternative to existing protocols, such as Paxos and PBFT.
In particular, the recently proposed Streamlet protocol provides a surprisingly simple and streamlined consensus approach, which crystallizes years of research in simplifying and improving classical consensus protocols.
While the simplicity of Streamlet is a major accomplishment, the protocol lacks certain practical features, such as supporting a stable block proposer, and it makes strong assumptions, such as synchronized clocks and the implicit echoing of all messages.
Most importantly, it requires sending $O(N^3)$ messages per block in a network of $N$ nodes, which poses a significant challenge to its application in larger networks.
To address these limitations, we introduce Pipelet, a practical streamlined consensus protocol.
Pipelet employs the same block-finalization rule as Streamlet, but attains state-of-the-art performance in terms of communication complexity and provides features that are crucial for practical applications, such as clock synchronization and stable block proposers.
At the same time, Pipelet retains the simplicity of Streamlet, which presents significant practical advantages, such as ease of implementation and verification.   
\end{abstract}

\maketitle

\section{Introduction}
\label{sec:intro}

In an increasingly networked digital world,  society is becoming more and more dependent on reliable digital services in spite of faulty components.
Furthermore, as networked systems become more decentralized and distributed, they must also tolerate malicious---or ``Byzantine''---components while still providing reliable service.
Over the last decade, the explosion of blockchain technology has opened possibilities of open, permissionless, and distributed digital infrastructure for a variety of applications, including finance~\cite{werner2021sok}, supply-chain management~\cite{chang2020blockchain}, and more~\cite{laszka2017providing,eisele2020blockchains}.
As blockchain applications and their adoption grow, the demand for more energy efficient and high performance blockchain technology increases tremendously.
At the core of fault tolerant distributed systems and blockchain technology is the fundamental problem of  reaching consensus, that is, reaching agreement among honest participants despite the presence of malicious ones.

Distributed consensus can be viewed as a replicated state machine~\cite{schneider1990implementing}, where a deterministic state machine is replicated across a set of processes but functions as one.
The input of the state machine is called transaction, which causes a transition of states.
A transaction~\cite{gray1981transaction} is an atomic operation that either completes or does not occur at all.
The responsibility of a consensus protocol is to order the transactions such that the resulting transaction log of every process is the~same.

Despite the great success of the Bitcoin network, there are major concerns, including high energy consumption and environmental impact, slow transaction confirmation, and low throughput.
More modern blockchain technology often builds on Proof-of-Stake~\cite{daian2019snow,kiayias2017ouroboros} instead of Proof-of-Work to be energy efficient, which often relies on a permissioned committee of validators~\cite{ethereum_pos}.
Such approaches can be described as ``blockchainized'' classic consensus algorithms, such as Tendermint~\cite{buchman2016tendermint,buchman2018latest}, which adopts classic consensus algorithm PBFT~\cite{castro1999practical}.
This approach typically involves two rounds of voting to confirm a block: a prepare round and a commit round.
An interesting recent idea is to pipeline the voting, that is, to piggyback the second round (commit) of the block on the first round (prepare) of the next block.
HotStuff~\cite{yin2019hotstuff}, PaLa~\cite{chan2018pala}, and Casper FFG~\cite{buterin2017casper} are such pipelined protocols.
On the other hand, an extremely simple and elegant protocol called Streamlet~\cite{chan2020streamlet} emerged under the assumption of synchronous~network.

Inspired by these two lines of work, we introduce a high-performance, simple, partially synchronized consensus protocol called \Pipelet.
\Pipelet is a propose-vote, longest-chain blockchain protocol, which employs the same block finalization rule as Streamlet~\cite{chan2020streamlet}.
However, unlike Streamlet, \Pipelet does not assume that all messages are implicitly echoed.
As a result, the communication complexity of \Pipelet in the non-failure scenario is only $O(N)$ messages per finalized block, while the complexity of Streamlet is $O(N^3)$.
Further, in contrast to Streamlet, \Pipelet provides clock synchronization and stable proposers---inspired by PaLa~\cite{chan2018pala}---to enable practical applications of the protocol.
To support these features and relax the assumption of implicit message echoing, \Pipelet includes a simple timeout-sync mechanism, retaining most of the attractive simplicity of Streamlet.
This simplicity stands in stark contrast with many state-of-the-art protocols, which provide similar communication complexity at the cost of complex protocol designs.
We believe that simple protocol designs have significant advantages, ranging from ease of theoretical analysis to ease of implementation, with smaller codebases and fewer potential~bugs.

\subsection{Overview of \Pipelet}

Here, we provide a very brief overview of the key elements of \Pipelet; we will provide a formal specification in \cref{sec:protocol}, once we have introduced the necessary assumptions and terminology.
The objective of the \Pipelet protocol is to enable participating nodes to reach consensus on a chain of finalized transaction blocks, which are recorded irrevocably and immutably.

\noindent \Pipelet operates in time periods, called epochs, as follows:

\begin{itemize}%
    \item \emph{Block Proposal:} In each epoch, one proposer is eligible to propose new blocks, chosen by applying a round robin policy to the set of nodes. The eligible proposer can create new blocks that extend a longest notarized chain of blocks and broadcast them to all other nodes.
    Such a proposal also includes a notarization of the existing block that the proposed block extends. 
    \item \emph{Voting:} Upon receiving a new block from the eligible proposer of the epoch, a  node verifies the validity of the block (e.g., it has to extend a longest notarized blockchain, taking into account the notarization that comes ``piggyback'' with the proposal).
    If the node finds that the block is valid, it responds by sending its vote for the block back to the proposer.
    \item \emph{Notarization:}
    Once a node has received votes for a block from two thirds of the nodes, the node considers the block to be notarized.
    Each node maintains its own set of notarized blocks, which is based on the votes that the node has seen, and its own set of longest blockchains formed by these notarized blocks.
    Note that nodes do not need to reach consensus on the set of notarized blocks; they need to reach consensus only on the finalized ones.
    \item \emph{Finalization:} If three consecutive blocks of a chain---without any timeouts between them---are notarized, then the node considers the middle block as well as all the preceding blocks of the chain to be finalized.
    These finalized blocks are irrevocably and immutably recorded.
    \item \emph{Epoch Advancement:}
    If a  node has not seen a new notarized block on the longest chain in a certain amount of time, it broadcasts a timeout message to all other nodes. The timeout message contains signatures on the epoch to which the node wishes to advance.
    Once a node has received signatures from two thirds  of the nodes, indicating their intention to move to next epoch, the node advances to the next epoch, where a different node becomes the eligible proposer.
    These timeouts are crucial for preventing a faulty or malicious proposer from obstructing the protocol.
    \item \emph{Block Synchronization:}
    Finally, when a nodes receives a timeout message,  it also broadcasts a block synchronization message. This synchronization message contains a set of nodes from the longest notarized chains, which prevents a malicious or faulty proposer from ``splitting'' the network by sending piggyback notarizations to a subset of the nodes. 
\end{itemize}

The remainder of this paper is organized as follows.
\cref{sec:related} provides an overview on related work in the area of consensus protocols.
\cref{sec:problem} introduces the assumptions and objectives for our protocol.
\cref{sec:dataStructDescp} describes the data structures and messages of the \Pipelet protocol.
\cref{sec:protocol} presents a formal specification of the \Pipelet protocol.
\cref{sec:analysis} states our main theorems on the consistency and liveness guarantees of our protocol.
\cref{perf} provides experimental results comparing \Pipelet with Pala and Streamlet.
\cref{sec:concl} provides concluding~remarks.
\cref{CP,LP} prove our consistency and liveness theorems, building on a series of lemmas.
\cref{sec:formula} provides detailed description of how we computed the messages in \cref{tab:comparison}.

 \vivek{The "C" in PBFT is a a constant where to mention about it?}

\begin{table*}[h!]
\centering
\caption{Comparison of Number of Messages}
\label{tab:comparison}
\begin{tabular}{ | c | l | l | } 
  \hline
 Consensus Protocol & Normal Scenario & Failure Scenario  \\ 
  \hline\hline
  Chained HotStuff~\cite{yin2018hotstuff} & $2N-f$ & $f\cdot(3N-f-1)+2N-f $  \\ 
  \hline
  Sync HotStuff~\cite{abraham2020sync} & $2(N-1)^2 $ & $f \cdot((N-1)(5N-3) )  + 2(N-1)^2$  \\ 
  \hline

  PBFT~\cite{castro1999practical} & $N^2+2N(f+1) - 1$ & $f\cdot(2f + 2f(N)) + $C$ \cdot N(2f+1) + N^2+2N(f+1) - 1 $ \\ 
  \hline 

  Tendermint~\cite{buchman2016tendermint} & $N + 4f$ & $f\cdot(N+4f)+N+4f$ \\ 
  \hline
  EBFT~\cite{ic_cyber_2019} &  $2(N+f) - 1$ & $f\cdot(2f(N+2))+2(N+f) - 1$ + $k \cdot (N-1)$\\ 
  \hline
  \Pipelet &$ 2N-2$ & $f \cdot (N^2+N-2) + 2N-2$\\ 
  \hline
\end{tabular}
\end{table*}

\section{Related Work}
\label{sec:related}

A blockchain, or consensus protocol, seeks to agree on an ever growing linearly ordered
log, such that consistency and liveness are ensured. Historically, there are three
main flavors of consensus protocols: Nakamoto consensus, classic consensus, 
and DAG consensus. \emph{Nakamoto consensus}, named after Bitcoin's inventor
Satoshi Nakamoto~\cite{nakamoto2008bitcoin}, chooses the longest chain fork rule to reach consensus at
high probability. It is elegant and robust, and has been empirically validated
by the Bitcoin network which secured almost one trillion U.S. dollar value for a decade. 
A particular advantage of Nakamoto consensus is that it is completely permissionless;
any node can join or leave at anytime, and the network is particularly robust
to node churn. Unfortunately, Nakamoto consensus has also been known to be
slow (requiring multiple confirmations to reach high probability), and the low
throughput is inherent~\cite{pass2017analysis} for its security.
\emph{Classic consensus} protocols
have deep roots in distributed systems research. Consensus is reached immediately
and deterministically by voting. It is not possible to have forking in classic
consensus protocols applied in blockchain. Notable examples of classic Byzantine fault tolerant consensus
include Paxos~\cite{lamport2001paxos}, PBFT~\cite{castro1999practical}, and
Tendermint~\cite{buchman2016tendermint}.
Paxos was one of the first Byzantine fault tolerant protocols, but
it was complicated. Later projects, such as Raft~\cite{ongaro2014search},
attempt to simplify the protocol and make it easy to understand. 
Classic consensus protocols reach
deterministic consensus much faster than Nakamoto consensus; however, they are
not permissionless because nodes cannot join or leave at any time. Also, the
voting committee has to be much smaller than that of the Nakamoto consensus, making it
more centralized. \emph{DAG consensus} is less relevant to this paper, so
we will omit it from our discussion of related work. 

Blockchains need some node to propose a block and
order the transactions. In a \emph{Proof-of-Work} (PoW) blockchain such as
Bitcoin, the miner that first solves a difficult puzzle gets to propose
a block and obtain a reward. As it is very expensive to solve the puzzle, the longest
chain has the most work performed; thus, it is the most expensive to fork. This
creates economic incentive for the network to be consistent. However, PoW
requires prohibitive amounts of energy. An alternative is \emph{Proof-of-Stake} (PoS)
consensus. In a very simplistic PoS, the mining power is replaced by virtual mining power
proportional to the currency that a miner has. The canonical chain is
still chosen to be the longest chain. This simplistic PoS protocol
is vulnerable to nothing-at-stake attacks---where miners mine on multiple forks
at the same time, which is always more profitable and incurs no extra cost---and other vulnerabilities. A more robust PoS protocol is based on a permissioned
committee of voters, and often employs classic consensus instead of
Nakamoto consensus.  \Pipelet, the  consensus protocol proposed in this paper,
is interesting in that although it is permissioned and can
reach deterministic consensus, it always extends the longest chain. 
This is in contrast to classic consensus protocols that do not have forks, 
and also in contrast to PaLa~\cite{chan2018pala}, which picks the ``freshest'' chain to extend.

\paragraph{Permissioned Blockchains}

Chan et al.\ proposed a simple partially-synchronous blockchain protocol called PaLa, which was inspired
by the pipelined-BFT par\-a\-digm \cite{chan2018pala}. 
They also proposed a conceptually simple and provably secure committee rotation algorithm for
PaLa as well as a generalization called ``doubly-pipelined PaLa'' that is geared towards
settings that require high throughput.
In a follow-up paper,
Chan and Shi
introduced an extremely simple and natural paradigm, called Streamlet, for
constructing consensus protocols~\cite{chan2020streamlet}. These protocols are inspired by the core techniques that have
been uncovered in the preceding years; however, their
proposal is simpler than any protocol before. 

One drawback of PaLa is that it uses the notion of ``freshest chain'' for consensus, which is more complex than the usual notion of ``longest chain.''
Interestingly, it can be shown that every freshest chain in the PaLa protocol (under its assumptions) is actually a longest chain, which we formalize in the following lemma.

\paragraph{Consensus Protocol Performance}
Although there has been significant research done in this area of permissioned consensus protocols, analyzing the performance of these protocols has received relatively very little attention. In 2017, Dinh et al. proposed a framework, called as BLOCKBENCH~\cite{acm_sigmod2017}, to evaluate blockchain systems that were designed with stronger security assumptions. The framework evaluates performance in terms of throughput, latency, scalability and fault-tolerance under varying number of nodes. It successfully identified gaps in performance among the three systems, which were attributed to the choice of design at various layers of the software stack.  This study was later complemented by Pongnumkul et al., who performed detailed  analysis of two well known blockchain platforms, Hyperledger Fabric and Ethereum~\cite{icccn_ieee2017}. The goal of this analysis was twofold; one was to benchmark these state-of-art systems, and the second was to enable companies to make informed decision about adopting blockchain technology in their IT systems.

 On the other hand, Hao et all.\ evaluated consensus algorithms with up to 100,000 transactions in Ethereum and Hyperledger Fabric and analyzed the results in terms of latency and throughput~\cite{ivs_ieee_2018}. {Ethereum implemented a variant of Proof-of-Work (PoW) consensus algorithm} using the same underlying principles as PoW, whereas  {Hyperledger Fabric adapted  Practical Byzantine Fault Tolerance (PBFT)}. When running workloads of the same configuration, the average latency of Ethereum showed {16.55x} higher throughput than that of Hyperledger Fabric, {4.13x}. When running workloads of different configurations, the latency of PoW rapidly increased as compared to PBFT. 
 
 In 2019, Li at al. proposed a new Extensible Consensus Algorithm based on PBFT (E-PBFT)~\cite{ic_cyber_2019} that was focused on solving the limitations of PBFT like large communication overhead, unable to identify Byzantine nodes and not suitable for dynamic networks. Due to the core design attributes of a classic consensus protocol, the need to trust any node does not exist. Alternately, the data recorded in the network can be trusted. This protocol implemented the selection of consensus nodes with verifiable random function (VRF) for dynamic networks. When the network is stable, lesser communication was required to reach consensus.

A more recent paper~\cite{csDC_2021} studied the performance and scalability of prominent consensus protocols like Practical Byzantine Fault Tolerance (PBFT)~\cite{castro1999practical}, Tendermint~\cite{buchman2016tendermint}, HotStuff~\cite{yin2019hotstuff} and Streamlet~\cite{chan2020streamlet}. When the protocols were evaluated on Amazon Web Services (AWS) under identical conditions, they revealed certain limitations due to the complexity in communication. Both, Streamlet and Pala use implicit echoing which makes them difficult to use for larger networks. Although Paxos was complicated, it could reach consensus in an asynchronous network that is prone to crash failures as oppose to the byzantine model that requires a number of cryptographic operations. PBFT was one of the first practical BFT replication solutions to the Byzantine problem. Empirical results have proved that HotStuff performed significantly better than all other protocols due to its design attributes like replacing all-to-all with all-to-one communication.  The proposed protocol, \Pipelet is also inspired by these core design attributes and ensures
ordered communication across all honest nodes to improve throughput.

One more interesting consensus algorithm, proposed by Abraham in 2020 is Sync Hotstuff~\cite{abraham2020sync}. It is a synchronous BFT protocol which provides reasonable throughput comparable to partially synchronous consensus protocols. 

In \cref{tab:comparison}, \vivek{does this look okay to you} 
 we present the number of messages required to finalize a block under normal conditions and failure conditions. By normal conditions, we mean when all nodes follow the protocol; and by failure condition, we mean when dishonest nodes do not follow the protocol and the block finalization does not happen. In \cref{tab:comparison} for the failure scenario we have considered consecutive $|f|$ faulty proposers, who just act dishonest just before the block is about to get finalized. Detailed explanation of how we computed the messages in \cref{tab:comparison} is provided in \cref{sec:formula}. Further, in \cref{tab:comparison}, $f$ represents the maximum number of dishonest nodes allowed by the respective consensus algorithm, $N$ represents the total number of nodes and $C \in \mathbb N_{> 0}$ and $k \in \mathbb N_{> 0}$ represents a positive integer as mentioned in \cref{{sec:formula}}. Under normal scenario, \Pipelet only requires $2N-2$ messages for block finalization which is less than the number of messages required for a block finalization by consensus algorithms mentioned in \cref{tab:comparison}. Also, please note under failure scenario number of messages required to finalize a block is comparable to other consensus algorithms mentioned in \cref{tab:comparison}. Under failure scenario, among the consensus algorithms mentioned in \cref{tab:comparison}, only HotStuff~\cite{yin2018hotstuff} and Tendermint~\cite{buchman2016tendermint} require less messages for block finalization as compared to \Pipelet.   \vivek{does this look okay to you}

\section{Assumptions, Objectives, and Threat Model}
\label{sec:problem}

\subsection{Assumptions}
\label{sec:setting_assumption}

In this section, we introduce our assumptions and objectives for the \Pipelet protocol. 
Since \Pipelet is based on PaLa, our assumptions are similar to those of PaLa, the key difference being (a) the assumption of ordered point-to-point message delivery, which obviates the need for nodes to store messages and process them later when they become valid, (b) use of the longest notarized chain instead of freshest notarized chain by the honest proposer to propose blocks, (c) and nodes do not implicitly echo messages, (d) the $\Votes$ do not get broadcasted.
In \cref{pc}, we describe the two roles that nodes play in the \Pipelet protocol and our assumptions about the nodes' behavior. In \cref{cp}, we introduce our assumptions about cryptographic primitives. In \cref{ca}, we discuss assumptions related to timing and communications. 
Finally, in \cref{obj}, we introduce the objectives of \emph{consistency} and \emph{liveness}. For a list of notations used in this paper, please see \cref{tab:notation} in \cref{sec:notation}.

\subsubsection{Proposer and Voting Nodes} \label{pc}
We assume that each node can play two type of roles: \emph{proposer role} and \emph{voting role}. Proposer node is responsible for creating blocks, while voter nodes are responsible for verifying the validity of these blocks and voting on them. 
Let $N$ denote the set of voting nodes and one of them in every epoch acts as a proposer. The proposer node plays dual role of being a proposer and also a voting node.
We say that a node is \emph{honest} if it always follows the protocol.
We assume that there are more than $\frac{2}{3} |N|$ honest nodes (i.e., more than two thirds of the voting nodes $N$ are honest).
We make no assumptions about the behavior of other nodes.

\subsubsection{Cryptographic Primitives} 
\label{cp} 

We assume that there exists a \emph{digital-signature scheme}, and each node has its own public-private key pair and knows the public key of every other node in $N$ (e.g., based on a public-key infrastructure).
In our pseudo-code, we let functions $\textbf{sign}(\cdot)$ and $\textbf{verify}(\cdot)$ denote the creation and verification of digital signatures.
Further, we assume that the digital-signature scheme supports signature aggregation: given a set of signatures from different nodes, it is possible to aggregate all these signatures into a single short signature.
In our pseudo-code, we let operator \textbf{aggregate} denote the aggregation of signatures.
Note that when a node verifies an aggregated signature in \Pipelet, it only needs to verify if at least $\frac{2}{3} |N|$ nodes have signed (denoted as $\textbf{verify}\left(|\text{set of aggregated signatures} \geq \frac{2}{3} |N| \right)$).
In practice, we can implement this using, e.g., Boneh – Lynn – Shacham (BLS) signature  aggregation~\cite{boneh2003aggregate}. 
\vivek{latex issue, not sure how to fix it}

Second, we assume that there exists a \emph{hash function} $H$.
We assume that both the digital-signature scheme and the hash function are secure in the sense that signature forgeries and hash collisions are impossible.
Please note that it would be straightforward to generalize our results by considering the possibility of forgeries and collisions.
We would need to extend each statement with an additional clause that considers the probability of forgeries and collisions.
Since we focus on the consensus protocol in this paper, we disregard the orthogonal matter of forgeries and collisions for ease of exposition.

\subsubsection{Timing and Communications} 
\label{ca}
We assume that time is continuous, and each node has a clock that can measure elapsed real time. 
We let $\Delta$ denote the unit of time, which is defined as the time bound on message delivery during a period synchrony; see detailed definition below. For the rest of the paper, we assume that $1sec \geq 5\Delta$ and $1min \geq 30\Delta$. Please note $sec$ and $min$ do not represent wall-clock seconds and minutes;
they are just convenient aliases for units of time measurement. %

Next, we introduce two assumptions about how nodes communicate, named \emph{ordered point-to-point message delivery}, which are crucial for ensuring consistency.
Ordered point-to-point message delivery means that from one node to another, messages are received in the same order as they are sent.
More formally, if node $n_1$ first sends message $m_1$ to node $n_2$ and later sends message $m_2$, then node $n_2$ will see message $m_1$ before message $m_2$.
This assumption is easy to satisfy in practice (e.g., using sequence numbers), and the underlying communication layer may provide it naturally.

Finally, we introduce 
assumptions about the delivery of messages, which are crucial for ensuring liveness.
Specifically, we assume a \emph{partially synchronous model for communication}. 
In a synchronous model, messages have a known bound on delivery
delay, which can be very difficult to guarantee in practical communication networks. On the other hand, an asynchronous network precludes
reaching consensus in a deterministic way
in the presence
of a single faulty node~\cite{fischer1985impossibility}.
A more practical
assumption is the partially synchronous model
\cite{dwork1988consensus}. In \cite{dwork1988consensus},
two partially synchronous models are proposed: a
model with unknown  delay bound, and a model
with period of synchrony. From a feasibility perspective,
they are equivalent. In this paper, we use the 
period of synchrony model as it is commonly used
in blockchains, where the participants need to
reach consensus continuously. 

We define the notion of \emph{period of synchrony} as follows.

\begin{definition}[Period of Synchrony] 
A time period $[t_0, t_1]$ is a \emph{period of synchrony} if and only if a message sent by an honest node at its local time $t \leq t_1$ as observed by the nodes' clock will be received by every honest recipient at its local time(observed by their clocks) by $\max\{t_0, t + \Delta\}$ at latest.
\end{definition}

\subsection{Objectives} \label{obj}

Our objective is to provide \emph{consistency} and \emph{liveness}. To define these terms, we consider the purpose of the blockchain protocol, which is to record transactions and to ensure that recorded transactions are immutable and irrevocable.
We assume that to achieve this, every node maintains an append-only ordered set of \emph{finalized blocks}, called a \emph{finalized chain}, where each block contains a number of transactions.
We say that a chain $\chain$ is a \emph{prefix} of chain $\chain'$ if and only if $|\chain| \leq |\chain'|$ and the first $|\chain|$ blocks of chain $\chain'$ are identical to chain $\chain$.
We can then define consistency and liveness with respect to these finalized chains: all honest nodes finalize the same chain of blocks, and when they can communicate with bounded delay, they are able to finalize new blocks.

\begin{definition}[Consistency]
If chain $\chain$ is the finalized chain of an honest node $n_1$ at time $t_1$ and $\chain'$ is the finalized chain of an honest node $n_2$ at time $t_2$, then either $\chain$ is the prefix of $\chain'$, or $\chain'$ is the prefix of $\chain$. 
\end{definition}

\noindent Informally, consistency means that honest nodes' finalized chains never diverge.

\begin{definition}[Liveness]
There exists a polynomial function $f$ such that for any number of protocol participants, in any period of synchrony that is at least $f(N)$ long, each honest node will add at least one new block to its finalized~chain. 
\end{definition}

\noindent Informally, liveness means that when honest nodes can communicate with each other with bounded delay, their finalized chains grow.

\subsection{Threat Model}

Please note that our threat model, that is, an adversary's goals and capabilities, are implicitly defined by the assumptions and objectives that we introduced for honest nodes in the preceding subsections.
Here, we provide further discussion of these from the perspective of an adversary that controls a set of dishonest nodes.
A dishonest node is a node which at any point of time might choose to not follow the protocol. In context to dishonest nodes we will talk about two cases. First, we will talk about  the events in which a dishonest node cannot prevent the correct functioning of the protocol; second,  the events in which a dishonest might alter the state of honest nodes selectively to prevent the correct functioning of the protocol.

First, a dishonest node cannot stop a honest proposer from proposing blocks, and it cannot stop a honest voting node from voting for proposed blocks. Dishonest node cannot change or forge $\Signatures$. During the period of synchrony after a message has been send, a dishonest node cannot prevent the message  from being delivered in at most $\Delta$ time after the message was send. Outside period of synchrony, a dishonest node cannot prevent messages from getting delivered. A dishonest node cannot prevent messages from getting delivered in ordered point to point messages.

Second, a dishonest node acting as a proposer can selectively withheld the votes of the last notarized block in its longest notarized chain selectively from some honest nodes or all honest nodes. A dishonest node, if it has not shared the votes for some blocks in its longest notarized chain, then it can prevent honest nodes from moving to the next epoch for certain duration of time by selectively sending votes for blocks notarized in its longest notarized chain to some honest nodes, so that they do broadcast $\Timeout$ message. A dishonest node can send any message as many times it wants~to.
\section{\Pipelet Data Structures and Messages}
\label{sec:dataStructDescp}

In this section, we introduce  data
structures, notations, and message types used by the \Pipelet protocol.
For a comprehensive list of notation used throughout this paper, please see \cref{tab:notation} in \cref{sec:notation}.

\subsection{Data Structures}

\subsubsection{Block}
A block is a tuple of the form $\langle e, s, \textit{TXs}, h \rangle$ where $e \in \mathbb{N}$ identifies the epoch (i.e., time period) to which the block belongs, $s \in \mathbb{N}$ is a  sequence number (i.e., a unique block identifier within the epoch),  $\textit{TXs}$ is a set of transactions (i.e., transactions recorded by the block), and $h$ is the hash value of a previous block (see below).

\subsubsection{Blockchain} 
A blockchain is an ordered set of blocks, where every block except the first one contains the hash value $h$ of the preceding block in the chain. 
The first block of the chain is called the \emph{genesis block}, which always has the form
$\langle 0, 0, \{\emptyset\}, 0 \rangle$. 
We let $G^*$ denote the genesis block throughout this paper. 

\subsubsection{Notation}
Next, we introduce notations that we use in our protocol specification and proofs.
We define the \emph{length} $|\chain| \in \mathbb{N}$ of chain $\chain$ as the number of blocks in $\chain$ except for the genesis block $G^*$.
Similarly, we define the \emph{length} $|B| \in \mathbb{N}$ of a block $B$ as the length $|\chain]$ of the chain $\chain$ that ends with block $B$.
Note that for any block $B$, there exists exactly one chain $\chain$ that ends with $B$ since every block references a unique preceding block using the hash value $h$.

We let $\chain[l]$ denote the block at length $l$ in $\chain$ where $1 \leq l \leq |\chain|$. 
Then, we let
$\chain[-l] = \chain[|\chain|-l+1]$.
Further, we let $\chain[i:j]$ denote the ordered set of blocks from $\chain[i]$ to $\chain[j]$ in $\chain$ (inclusive of blocks $\chain[i]$ and $\chain[j]$).
Finally, we let $B.e$ and $B.s$ denote the epoch number $e$ and sequence number $s$ of block~$B$.

\subsubsection{Normal and Timeout Blocks}
To define the validity of blockchains, we also introduce the notion of normal and timeout blocks.
First, a block $\chain[i]$ is a \emph{normal block} if $\chain[i].e = \chain[i-1].e$ and $\chain[i].s = \chain[i-1].s + 1$.
Second, a block $\chain[i]$ is a \emph{timeout block} if $\chain[i].e > \chain[i-1].e$ and $\chain[i].s = 1$ .

\subsubsection{Blockchain Validity} 
We say that a blockchain $\chain$ is valid if an only if
\begin{enumerate}
\item for every $1 \leq i \leq |\chain|$, $\chain[i].h = H(\chain[i-1])$;
\item  for every $1 \leq i \leq |\chain|$, block $\chain[i]$ is either a normal or timeout block.
\end{enumerate}
Please recall that $H$ is a collision-resistant hash function.
We also require the transactions recorded by the blockchain to be valid; however, we do not discuss specific rules of validity for transactions  since %
the goal of this paper is to build a permissioned consensus protocol (instead of focusing on a specific application).

\subsection{Messages Types}

\Pipelet nodes reach consensus by %
exchanging four types of messages: \Proposal, \Vote, \Timeout, and \Sync messages.
Please recall that every message is digitally signed and has an implicit sender, which we omit from the notation for the sake of simplicity.

\begin{itemize}
\item {$\Proposal$}: A proposer node $n \in N$ can send this message to propose a new block which may eventually be recorded on the finalized chain. The $\Proposal$ message along with the proposed block also contains $\Signature$ of votes of the previous notarized block. The proposal message has the format: \vivek{latex styling problem} 
$\Proposal\langle B',$ $Votes[B], \Signature \rangle$, where $B'$ represents the proposed block, $Votes[B]$ represent the aggregation of \emph{signatures} on block $B$ voted upon by the voting nodes, and $\Signature$ represents the \emph{signature} of the current proposer on the proposed block $B'$.

\item {$\Vote$}: A voting node $n \in N$ can send this message to vote for (i.e., endorse) a proposed block to the proposer node of the proposed block. The vote message has the format: $\Vote(B, \Signature)$, where $B$ is the block voted upon and \Signature represents the sign of voting node on block $B$. 

\item {$\Timeout$}: A voting node $n \in N$ can send this message to indicate its willingness to advance to the next epoch $e'$. A \vivek{latex styling problem} \Timeout message has the format:

$\Timeout \langle e', \Signature \rangle$, where $e'$ represents the epoch to move to next and \Signature represent the signature by the voting node on the next epoch $e'$.

\item {$\Sync$}: Any node $n \in N$ can send its set of longest notarized chains and the votes belonging to those chains to other nodes using the $\Sync$ message. 
\end{itemize}

\section{\Pipelet Protocol Specification}
\label{sec:protocol}
Here, we provide a complete and formal specification of the  \Pipelet protocol. 
We specify the protocol by describing
the behavior of the participating nodes;
in particular, we specify how a node responds to an event (e.g., respond to receiving a message) by updating its state and/or by sending messages.

In \cref{ns}, we define the state of a node and how the state is initialized. 
In \cref{pd}, we describe how nodes synchronize their epochs, how an eligible proposer node is chosen for an epoch, how blocks are proposed and voted upon, and how nodes finalize blocks.

\subsection{Node State} \label{ns}

\subsubsection{Clock Variables}
Every voting node has clock variable, called $\textit{EpochTimer}$ and the proposer node has a clock variable, called $\textit{NotarizationTimer}$. Both these variables measure real time, and may be reset to zero (i.e., $\textit{EpochTimer}$, $\textit{NotarizationTimer}$ are the amount of real time elapsed since the last reset. 

\subsubsection{Epoch and Sequence Numbers}
The \Pipelet protocol divides time into intervals, called \emph{epochs}, which are identified by integers.
Since the nodes' clocks are not synchronized to each other, 
every node (including both proposer and voting nodes) maintains its own epoch number~$e$. 
Within each epoch, the \Pipelet protocol allows the eligible proposer to propose multiple blocks, which are identified by their sequence numbers. 
To keep track of blocks within an epoch, every node (including both proposer and voting nodes) maintains its own sequence number~$s$.
Both the epoch and sequence numbers are initially set to $1$ (i.e., initially, $\langle e,s \rangle := \langle 1,1 \rangle$ at every node).  

\subsubsection{Received Messages} 
Every node maintains a set \vivek{latex styling problem} 
$\textit{Timeout}$- $\textit{Signatures}(e')$ for every epoch number~$e'$,  which represents the $\Signatures$ to move to epoch $e'$ send by different different voting nodes.
Every node also maintains a set $\textit{Votes}(B)$ for every block~$B$, which represents the set of voting nodes from which the node has received $\Vote(B,\Signature)$ message.
Both sets are initially set to empty sets (i.e., initially,  $\textit{TimeoutSignatures}(e') = \emptyset$ for every $e'$ and $\textit{Votes}(B) = \emptyset$ for every $B$).
Note that in practice, a node can initialize these sets in a ``lazy'' manner, when the first $\textit{TimeoutSignatures}(e')$ or $\Vote(B)$ message is received for an $e'$ or $B$.
Further, once a node has advanced to epoch $e'$ (i.e., once $e \geq e'$), it does not need to maintain set $\textit{TimeoutSignatures}(e')$ anymore. Similarly, once it has finalized a block $B$, it no longer needs to maintain $\textit{Votes}(B)$.
Every node maintains a set $\textit{Notarized}$, which represents the set of notarized blocks the node maintains. One way to add blocks to the $\textit{Notarized}$ set is by receiving $\Sync$ messages, which represent the set of longest notarized chains of the sender node. We talk more about the notarized blocks in \cref{blc1}.

\subsubsection{Blocks} \label{blc1}
To finalize blocks, the \Pipelet protocol uses the notion of notarization.
If a block has received votes from at least two thirds of the voting nodes, the block is considered to be notarized. 
Then, if three consecutive normal blocks are notarized, then the middle block and all preceding blocks on its chain are considered to be finalized; we will present the formal notarization rule later. 

The $\textit{Notarized}$ set contains all the notarized blocks in the node's view.
By ``\emph{in the node's view},'' we mean ``\emph{based on all the messages that the node has received}.''

For ease of presentation, we let $\textit{Longest}$ denote the set of longest notarized blocks for each node.
Formally, for every node, we let $\textit{Longest} = \operatorname{argmax}_{B \in \textit{Notarized}} |B|$.

Finally, every node maintains a set $\textit{Finalized}$, which represents the set of finalized blocks in the node's view.
Note that maintaining $\textit{Finalized}$ is redundant since the set of finalized block can always be determined based on the set of notarized blocks $\textit{Notarized}$ and on the finalization rule.
Nonetheless, we maintain the set $\textit{Finalized}$ for ease of presentation and since this is the main output of the protocol.
Both the set of notarized block and the set of finalized blocks initially contain only the genesis block (i.e., initially, $\textit{Notarized} = \{G^*\}$ and $\textit{Finalized} = \{G^*\}$).

\begin{figure}[t]
    \centering
    \includegraphics[width=\columnwidth]{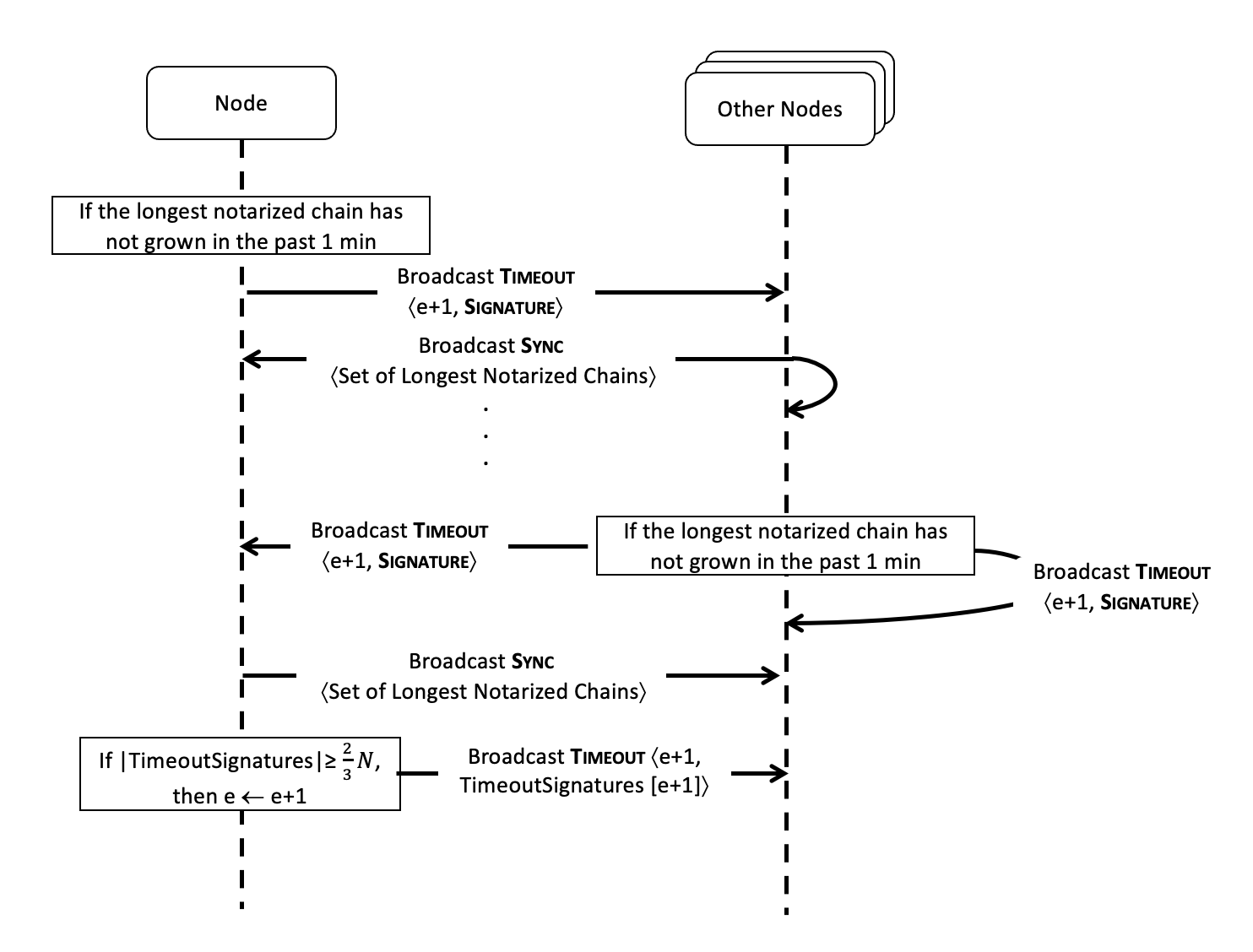}
    \caption{Process of clock and block synchronization.} 
    \label{fig:seq_clock}
\end{figure}

\subsection{Protocol Description} \label{pd}

\subsubsection{Clock Synchronization} \label{clock_des} 

First, we describe how a node advances its local epoch number $e$.
Whenever a node (either voting node or proposer node) receives $\Signatures$ from at least two thirds of the voting nodes for some epoch $e' > e$, indicating the nodes' willingness to advance to epoch $e'$, then the node immediately advances its own epoch number to $e'$ (see \cref{alg:sharedF1}).
Formally, if \vivek{latex styling problem} $\textit{TimeoutSignatures}(e')$ $\geq \frac{2}{3} |N|$ for some $e' > e$, then $e \gets e'$.

Second, we specify when voting nodes send their \vivek{latex styling problem} 
$\Timeout($ $e', \Signature)$ messages.
The purpose of these messages is to advance to the next epoch, and hence, switch to the next eligible proposer if the longest notarized chain has not grown in a while (i.e., if the protocol is ``stuck'').
Voting nodes achieve this using their $\textit{NotarizationTimer}$: when advancing to a new epoch, voting nodes always reset their $\textit{NotarizationTimer}$ and $\textit{EpochTimer}$ to $0$; similarly, when extending a longest notarized chain with a new notarized block, voting nodes reset their $\textit{NotarizationTimer}$ to $0$.
If the $\textit{NotarizationTimer}$ of a voting node ever reaches $1 min$, it immediately broadcasts a $\Timeout((e+1), \Signature)$ message to every other node (see \cref{alg:vote}).
In other words, a voting node sends a \Timeout message, indicating its willingness to advance to the next epoch, if its longest notarized chain has not grown in $1 min$ (and it has already spent $1 min$ in the current epoch).
Note that sending the $\Timeout((e+1), \Signatures)$ message does not advance the node to the next epoch; it needs to see $\Signatures$ from two thirds of voting nodes to advance, just like all the other nodes. And as soon as a voting node enters next epoch $(e+1)$ it immediately broadcast $\Timeout\langle (e+1), \textit{TimeoutSignatures}[e+1]  \rangle $.

\begin{algorithm}[th]
\caption{Clock Synchronization and Finalization}\label{alg:sharedF1}
\begin{algorithmic}[1]

\State \textbf{Event:} On reception of message $\Timeout\langle e',$ $\Signatures \rangle$ from node $n'$
\For{block in $\langle \textit{Longest}~\setminus~\textit{SyncedBlocks} \rangle$}
\State {$\textit{SyncVotes[block]} \gets \textit{SyncVotes[block]} \cup \textit{Votes[block]}$}
\EndFor
\State Broadcast $\Sync \langle \langle \textit{Longest}~\setminus~\textit{SyncedBlocks} \rangle, \langle \textit{SyncVotes} \rangle \rangle$
\State {$\textit{SyncedBlocks} \gets \textit{SyncedBlocks}~\cup~\textit{Longest}$}
\If{$e' > e$} 
\State $\textbf{aggregate:}~\textit{TimeoutSignatures}[e'] \gets$\newline$~~~~~~~~~~~~~~~~~~~~~~~~\textit{TimeoutSignatures}[e'] \cup \Signatures$
\vspace{0.15em}
\If {$\textbf{verify}\left(|\textit{TimeoutSignatures}[e']| \geq \frac{2}{3} |N|\right)$}
  \State $e \gets e'$
  \State $s \gets 1$
  \State $\textit{EpochTimer} \gets 0$
  \State $\textit{NotarizationTimer} \gets 0$
  \State {Broadcast $\Timeout \langle e', \textit{TimeoutSignatures}[e'] \rangle$}
  \EndIf
\EndIf

\\

\State \textbf{Event:} On notarization of block $B$
\If {$\exists \{B_{1}, B_{2}\} \in \textit{Notarized}$ \textbf{such that} $(B.h = H(B_{2})  \land B_{2}.h = H(B_{1})) \land B_{2} \notin \textit{Finalized} \land B_{1}.s > 1$}
\State  {$\textit{Finalized} \gets \textit{Finalized} \cup \{B_{2}\}$}
\EndIf
\end{algorithmic}
\end{algorithm}

\subsubsection{Notarizations}
A node's longest notarized chain can grow in two ways: 
by receiving at least $\frac{2}{3}|N|$ votes on the proposed block or 
by receiving a $\Sync$ message. A honest node upon receiving a \Timeout message, immediately broadcast the a \Sync message which includes the longest notarized chains of the sending node. If the longest notarized chain of a honest node grows, the $\textit{NotarizationTimer}$ is reset to zero. For  a pseudo-code of the Notarization rule, please see \cref{alg:proposer} and \cref{alg:vote}.

\subsubsection{Finalization}
 When a node (either voting or proposer node) encounters three consecutive normal blocks with epoch and sequence numbers $(e,s), (e, s+1), (e, s+2)$ in a fully notarized chain the node has observed, the node finalizes the middle block of the three consecutive blocks, as well as the entire prefix chain of the middle block. For a node, finalization makes all transaction in the finalized block and in its prefix chain permanent. 
 
For a pseudo-code of the clock functionality and finalization rule, please see \cref{alg:sharedF1}.
For an illustration, see \cref{fig:seq_clock}.

\begin{figure}[t]
    \centering
    \includegraphics[width=\columnwidth]{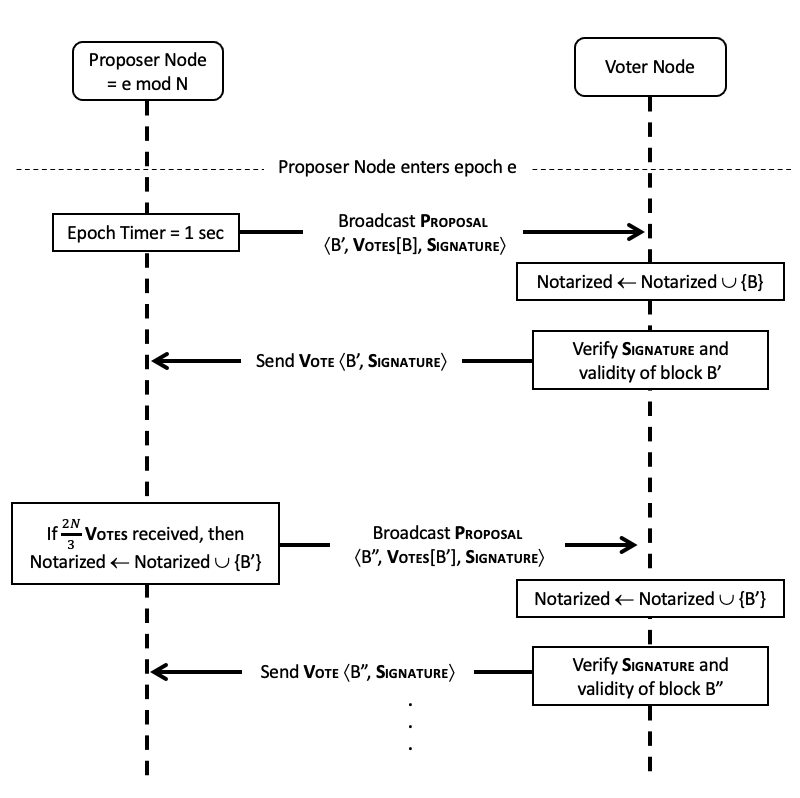}
    \caption{Process of block proposal and voting.}
    \label{fig:notarization}
\end{figure}

\subsubsection{Proposal}\label{prop_desc}
For each epoch, we designate one proposer node to be eligible to  propose blocks for that epoch. We consider a \emph{stable proposer} approach, where stable proposer means the ability to propose more than one block per epoch. To choose a proposer for an epoch, we use a simple round robin approach: for epoch $e$, the $(e \operatorname{mod} N)$th proposer is considered to be the eligible proposer (i.e., we number proposers from $1$ to $|N|$, and let the proposer whose number is $(e \operatorname{mod} N)$ be eligible for epoch $e$). In \cref{alg:proposer}, \emph{identifier of node} represents the unique proposer identifier number, which ranges from $1$ to $|N|$. Whenever a proposer node enters a new epoch $e$, it first waits for  $1sec$, so that the current proposer can receive a longest notarized block belonging to the epoch $\langle e - 1 \rangle$ in case the current proposer's longest notarized block belongs to a epoch earlier than epoch $\langle e - 1\rangle$, and if the node is an eligible proposer for epoch $e$, then the node performs the following  from that point on:

\begin{enumerate}
\item First, let $\chain_L$ be one of the longest fully notarized chains that the proposer node has observed so far. Then, propose the timeout block $B:= \langle  e,1, \textit{TXs}, H(\chain_{L}[-1])\rangle$ extending from $\chain_L$, where $\textit{TXs}$  is a set of outstanding transactions, and broadcast $B$ to all the nodes.  
\item Next, repeat the following: if the node has proposed blocks at sequence numbers $\langle e, 1 \rangle, \ldots, \langle e, s \rangle$ so far, and all $\langle e, 1 \rangle, \ldots, \langle e, s \rangle$ have been notarized in the proposer's view, then propose the next normal block containing the outstanding transactions (extending from the chain ending at $\langle e, s \rangle$ that it has proposed so far) and broadcast the proposed block $B:= \langle  e,s+1, \textit{TXs}, H(\chain_{L}[-1])\rangle$ to all the nodes.
\end{enumerate}

The proposal message from the proposer along with the proposed block also contains the \emph{votes} for the last notarized block from the parent chain of the proposed block and also \emph{sign} of the proposer on the current proposed block. 

For a pseudo-code of the proposer functionality, please see \cref{alg:proposer}.
For an illustration, see \cref{fig:notarization}.

\begin{algorithm}[th]
\caption{Proposer Functionality}\label{alg:proposer}
\begin{algorithmic}[1]
\State \textbf{Event:} When $\textit{EpochTimer} = 1sec$
\State \textbf{Do Propose}
\\
\State \textbf{Event:} On reception of message $\Vote\langle B, \Signature\rangle$ from a node $n$
\State $\textbf{aggregate:}~\textit{Votes}[B] \gets \textit{Votes}[B] \cup {\Signature}$
\If {$\textbf{verify}\left(|\textit{Votes}[B]| \geq \frac{2}{3}|N|\right)$}
\State $\textit{Notarized} \gets \textit{Notarized} \cup \{B\}$

\If{$(\textit{EpochTimer} > 1sec) \land (e = B.e)$}
\State \textbf{Do Propose}
\EndIf 
\EndIf
\\
\State \textbf{Function Propose:}
\If{identifier of node $n = e \operatorname{mod} |N|$}
\State {$B \gets \textnormal{arbitrary block} \in \textit{Longest}$}
\State {$B' \gets  \langle e, s, \textit{TXs}, H(B) \rangle$}
\State {$\Signature \gets \textbf{sign}(B') $}
\State {Broadcast $\Proposal\langle B', \textit{Votes}[B], \Signature \rangle$} %
\State {$\textit{Votes}[B'] \gets \{ \Signature \}$}
\State {$s \gets s+1$}
\EndIf
\end{algorithmic}
\end{algorithm}

\subsubsection{Vote} Upon receiving the proposed block of the form $B := \langle e, s', \textit{TXs}, h' \rangle$ from an eligible proposer for epoch $e$, a voting node $n$ votes on the proposed block $B$ extending from a chain $\chain_{L}$(one of the longest notarized chains that the node has observed) and sends back the vote to the proposer of the block being voted upon, if and only if the following~hold:
\begin{enumerate}
    \item the node is currently in local epoch $e$ and it has not voted for any block with the epoch and sequence number pair $\langle e,s' \rangle$.
    \item all blocks preceding $\langle e,s' \rangle$ have already been notarized. %
\end{enumerate}

For a pseudo-code of the voting functionality, please see \cref{alg:vote}.
Note that this pseudo-code also includes the timeout functionality, since it provides a complete description of the voting-node specific functionality of \Pipelet.
For an illustration, see \cref{fig:notarization}.

\begin{algorithm}[th]
\caption{Voter Functionality}\label{alg:vote}
\begin{algorithmic}[1]
\State \textbf{Event:} When $\textit{NotarizationTimer} = 1min$
\State{$\Signature \gets \textbf{sign}(e + 1)$}
\State Broadcast $\Timeout \langle (e + 1), {\Signature} \rangle$
\State $\textit{NotarizationTimer} \gets 0$
\\

\State \textbf{Event:} On reception of $\Sync\langle \langle \textit{Chains} \rangle, \langle \textit{SyncVotes} \rangle\rangle$
\If{$\textbf{verify}(\textit{Chains})$}
\State {$B_{\textit{prev}} \gets \textnormal{arbitrary block} \in \textit{Longest}$}
\State $\textit{Notarized} \gets \textit{Notarized} \cup \textit{Chains}$
\State {$B_{\textit{next}} \gets \textnormal{arbitrary block} \in \textit{Longest}$}
\If {$|B_{\textit{next}}| > |B_{\textit{prev}}|$}
\State $\textit{NotarizationTimer} \gets 0$
\EndIf
\For{{$B$ in \textit{SyncVotes}}}
\State {$\textit{Votes}[B] \gets \textit{Votes}[B] \cup {SyncVotes[B]}$}
\EndFor
\EndIf

\\
\State \textbf{Event:} On reception of message $\Proposal \langle B', \textit{Votes}[B],$ $\Signature \rangle$, where $B' = \langle e', s', \textit{TXs}, h' \rangle$ from node~$n'$
\If {$\textbf{verify}\left(|\textit{Votes}[B]| \geq \frac{2}{3}|N|\right)$}
\State {$B_{\textit{prev}} \gets \textnormal{arbitrary block} \in \textit{Longest}$}
\State $\textit{Notarized} \gets \textit{Notarized} \cup \{B\}$
\If{ $|B| > |B_{\textit{prev}}|$}
\State {$\textit{NotarizationTimer} \gets 0$}
\EndIf
\If{$(e' = e) \land (s' \geq s) \land (n' = e \operatorname{mod} |N_{p}|)$\newline
$~~~~~~~~\land \textbf{verify}(\Signature, B') \land \exists B \in \textit{Longest} ~ \Big[$\newline
$~~~~~~~~~~~~~~\left(h' = H(B)\right) \land \left(|B'| = |B| + 1\right)$\newline
$~~~~~~~~~~~~~~\land \big[ ((e' > B.e) \land (s' = 1))$\newline 
$~~~~~~~~~~~~~~~~~~~\lor ((e' = B.e) \land (s' = B.s + 1)) \big]\Big]$}
\State {$\Signature \gets \textbf{sign}(B')$}
\State {Send $\Vote \langle B', \Signature \rangle$ to $n'$}
\State {$s \gets s' + 1$}
\EndIf
\EndIf
\end{algorithmic}
\end{algorithm}

\section{Protocol Guarantees}
\label{sec:analysis}

In this section, we state our main theorems about the consistency and liveness guarantees of our protocol. 

\subsection{Consistency}
\label{sec:consistency}
Here, we introduce the main theorem for consistency.
\begin{theorem} \label{T13} 
Let us assume the following: (a) more than $\frac{2}{3} |N|$ nodes are honest, (b) $\chain'$ is the finalized chain of an honest node, and $\chain''$ is the finalized chain of another honest node. Then, it must hold that either $\chain'$ and $\chain''$ are the exact same chain, or one of them is a prefix of the other.
\end{theorem}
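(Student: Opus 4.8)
The plan is to follow the classic Streamlet-style argument, adapted to \Pipelet's mechanics. The backbone is a "quorum intersection" lemma: since any notarization requires votes from more than $\frac{2}{3}|N|$ nodes and at most $\frac{1}{3}|N|$ nodes are dishonest, any two notarized blocks with the same epoch–sequence pair $\langle e,s\rangle$ must in fact be the \emph{same} block, because the two vote-quorums overlap in an honest node, and an honest node votes at most once per $\langle e,s\rangle$ (voting rule condition 1, and the fact that honest nodes advance their local $s$ after voting). This "uniqueness of notarized blocks per slot" is the single most important ingredient and I would prove it first as a standalone lemma. A closely related lemma: if a chain $\chain$ has a notarized block at length $\ell$, then every block of $\chain$ at length $\le \ell$ is also notarized (this is exactly the voting precondition 2, propagated through quorum intersection), so notarized blocks form a tree rather than an arbitrary set.

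Next I would prove the key "consistency of finalization" lemma: if an honest node finalizes block $B_2$ because of three consecutive \emph{normal} notarized blocks $B_1, B_2, B_3$ with epoch–sequence numbers $\langle e,s\rangle,\langle e,s+1\rangle,\langle e,s+2\rangle$, then no notarized block $B'$ can exist on a chain that conflicts with $B_2$ (i.e.\ neither contains $B_2$ nor is contained in $B_2$'s chain) unless $|B'|$ forces a contradiction. The standard argument: suppose toward contradiction there is a notarized conflicting block; take the conflicting notarized block of \emph{smallest length} $\ell'$. By the tree lemma its whole prefix is notarized, and by minimality its parent is \emph{not} conflicting, so it branches off at some length $\le |B_1|-1$ or sits "between" $B_1$ and $B_3$ in length. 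One then shows its epoch number must be strictly greater than $e$ (an honest node in epoch $e$ would not have voted for two different children of the same slot, and the piggyback-notarization plus the \Sync mechanism guarantees that honest voters for $B_2, B_3$ had the relevant prefix in view). But a block with epoch $>e$ at a length in the range $[|B_1|, |B_3|]$ would make one of $B_2, B_3$ fail to be a \emph{normal} block extending a notarized parent of the right epoch — contradiction with the finalization precondition that all three are normal and consecutive. The \Sync/\Timeout machinery enters here: it is what rules out a malicious proposer having split the honest nodes' views so that two different length-$\ell$ notarized blocks both garner honest votes; I would isolate that as a lemma stating that whenever an honest node times out of epoch $e$, every honest node receives (via \Sync) all notarized blocks on that node's longest chains before voting in any later epoch.

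Finally, I would assemble the theorem. Given two honest nodes with finalized chains $\chain'$ and $\chain''$, each finalized block on each chain was finalized via some triple of consecutive normal notarized blocks. Applying the finalization-consistency lemma to the \emph{longer} of the two deepest finalized blocks shows it cannot conflict with the other, hence one chain's deepest finalized block lies on the other's chain; combined with the tree lemma and uniqueness-per-slot, this forces one finalized chain to be a prefix of the other (or equal). I would phrase the induction on $\min(|\chain'|,|\chain''|)$ to make the prefix conclusion clean.

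The main obstacle I anticipate is the "no conflicting notarized block" step in a setting \emph{without} implicit echoing: in Streamlet the universal echo makes everyone's notarized set essentially common knowledge, whereas in \Pipelet votes go only to the proposer and notarized chains propagate only via \Proposal piggybacks and \Sync messages. So the delicate part is arguing that the honest quorum that notarized $B_2$ (equivalently $B_3$) could not \emph{also} have, within the same epoch $e$, notarized a sibling at the same length — this requires carefully tracking that an honest voter who votes for $\langle e, s+1\rangle$ has $B_1$ in its \textit{Notarized} set and has incremented its local $s$, together with the ordered point-to-point delivery assumption to prevent an honest node from "going back." Getting that bookkeeping exactly right, and verifying that the \Timeout-triggered \Sync broadcast closes every window a malicious stable proposer could exploit, is where the real work lies.
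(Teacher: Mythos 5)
Your first two ingredients are sound and match the paper: uniqueness of a notarized block per epoch--sequence slot via quorum intersection is exactly Lemma~\ref{L12} (which the paper imports from PaLa), and the observation that notarized blocks form a tree is implicit in the voting precondition. The problem is your central ``consistency of finalization'' lemma and, in particular, its decisive step. You claim that a conflicting notarized block with epoch $>e$ at a length in $[|B_1|,|B_3|]$ ``would make one of $B_2,B_3$ fail to be a normal block.'' That is a non sequitur: whether $B_2,B_3$ are normal blocks depends only on their own parents on their own branch, and the existence of a notarized block on a different branch cannot retroactively violate the finalization precondition. In \Pipelet the contradiction has to come from the \emph{voting rule} -- an honest node votes only for a proposal extending one of its longest notarized chains -- combined with epoch monotonicity: a quorum (hence more than $\frac{1}{3}|N|$ honest nodes) that notarized the pivot-end block had already seen the fully notarized prefix up to the pivot-begin block during epoch $e$, so when a later-epoch timeout block on the competing chain is proposed from a shorter parent, fewer than $\frac{2}{3}|N|$ nodes will vote for it. This forces a case analysis comparing the lengths of the two competing parents (equal, shorter, longer), which your sketch does not contain.

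There is also a structural mismatch: your lemma is one-sided (``once $B_2$ is finalized, no conflicting notarized block exists at all''), which is the Streamlet-style statement and is powered there by one-block-per-epoch plus implicit echoing. \Pipelet finalizes three consecutive \emph{sequence numbers within a single epoch}, has a stable proposer, and has no echoing, so the paper proves the two-sided statement instead: it takes the last common block $B_0$ of the two finalized chains, defines pivot-begin/pivot-end blocks ($B_1,B_2$ and $B_1',B_2'$) on \emph{each} chain, notes that at least one of the successors of $B_0$ must be a timeout block (else Lemma~\ref{L12} is violated), and derives the contradiction from the quorum-plus-longest-chain argument above in each epoch/length case. Your proposal would need either to be reorganized along these two-sided lines, or to supply a proof that the one-sided statement actually holds in \Pipelet -- which is doubtful, since the protocol explicitly allows honest nodes to disagree on notarized (as opposed to finalized) blocks. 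You correctly identify that the absence of implicit echoing is where the delicacy lies, but the \Sync/\Timeout bookkeeping you defer is precisely the part of the argument that must be made explicit for the quorum-visibility claims to hold.
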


The \Pipelet protocol guarantees consistency at all times, even outside of periods of synchrony. The above theorem simply states that at any point of time, if we have two finalized blockchains, they must either be the same chains or if not the same chains then one must be the prefix of the other. 

The lemmas---along with their proofs---required to prove the main consistency theorem as well as the proof of the main consistency theorem (\cref{T13}) are provided in \cref{pt13}.

\subsection{Liveness}
\label{sec:liveness}
Here, we introduce the main theorem for liveness.
\begin{theorem} \label{L23L}
Let us assume the following: (a) more than $\frac{2}{3} |N|$ nodes are honest, (b) time interval $[t_{0}, t_{1}]$ is a period of synchrony that is at least $(1sec+7\Delta+1min) \cdot f)+ 4 \cdot (4min + 2\Delta)+ ( 2sec+ 18\Delta+ 1min)$ long, where $f$ is the number of dishonest nodes.
Then, for any honest node, the node's finalized chain at time $t_1$ must be longer than its finalized chain at time $t_0$.
\end{theorem}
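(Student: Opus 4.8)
\textbf{Proof proposal for \cref{L23L} (liveness).}

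The plan is to show that within the given period of synchrony, the protocol cannot remain ``stuck'' for too long, and once an honest proposer gets to act in a clean epoch, it produces three consecutive notarized normal blocks, which triggers finalization. First I would argue that after the start $t_0$ of the period of synchrony, all honest nodes synchronize their epochs quickly. The key mechanism is the $\Timeout$/$\Sync$ machinery: if the longest notarized chain of some honest node has not grown in $1min$, that node broadcasts a $\Timeout\langle e+1,\Signature\rangle$; within $\Delta$ all honest nodes receive it, and—crucially—each of them responds by broadcasting a $\Sync$ message carrying its longest notarized chains together with the associated votes. After another $\Delta$, every honest node has seen every other honest node's longest notarized chain, so any ``split'' caused by a dishonest proposer withholding piggyback notarizations is healed. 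I would formalize this as a lemma: within $O(\Delta)$ after any honest node broadcasts a $\Timeout$, all honest nodes agree on the length of the longest notarized chain and are within one epoch of each other, and since honest nodes rebroadcast $\Timeout\langle e',\textit{TimeoutSignatures}[e']\rangle$ upon entering epoch $e'$, all honest nodes reach the same epoch $e'$ within a further $\Delta$.

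Next I would bound the time spent on dishonest proposers. Each of the at most $f$ dishonest proposers can stall progress in its epoch, but only for a bounded time: an honest node that sees no new notarized block on its longest chain for $1min$ sends a $\Timeout$, and as argued above all honest nodes then advance past that epoch within roughly $7\Delta$ (receiving the timeout, broadcasting and receiving $\Sync$, aggregating $\frac{2}{3}|N|$ signatures, rebroadcasting, and advancing). So each dishonest proposer costs at most about $1sec + 7\Delta + 1min$ of real time (the $1sec$ being the proposer's initial wait), giving the $((1sec+7\Delta+1min)\cdot f)$ term. Because proposers are chosen round-robin, after passing through at most $f$ dishonest proposers we are guaranteed to reach an epoch with an honest eligible proposer.

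Then I would analyze one ``good'' epoch with an honest proposer, all honest nodes already in that epoch and agreeing on the longest notarized chain. The honest proposer waits $1sec$, proposes a timeout block extending a longest notarized chain; within $\Delta$ all honest nodes receive it, and since they are in the right epoch, have not voted at that sequence number, and all preceding blocks are notarized (by the $\Sync$-healing), more than $\frac{2}{3}|N|$ of them vote; within another $\Delta$ the proposer collects these votes and notarizes the block, then immediately proposes the next normal block, and so on. Each additional normal block thus costs $2\Delta$, so three consecutive notarized normal blocks in this epoch take about $1sec + 6\Delta$; but to be safe one allows a few such good-epoch attempts (the $4\cdot(4min+2\Delta)$ and $(2sec+18\Delta+1min)$ slack terms) to absorb the possibility that the first honest proposer inherits a chain whose last block is a timeout block or is in an earlier epoch, requiring it to first lay down a fresh normal-block segment, and the possibility of a dishonest node triggering a spurious $1min$ timeout mid-epoch. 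Once three consecutive normal blocks $(e,s),(e,s{+}1),(e,s{+}2)$ are notarized in an honest node's view, its finalization rule (\cref{alg:sharedF1}) finalizes the middle block and its whole prefix, which is strictly longer than any chain it had finalized at $t_0$ (since $(e,s{+}1)$ has epoch $e$ exceeding all earlier epochs' blocks, or at least sits strictly above the old finalized tip). Summing the three contributions gives the stated bound.

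The main obstacle I anticipate is the careful bookkeeping around the $\Sync$ mechanism and the ``inherited'' chain shape: I must show that a dishonest proposer cannot indefinitely keep honest nodes from timing out by dribbling out just enough notarizations, and that when the first honest proposer finally acts it does not waste too many epochs because the longest notarized chain it extends ends in a timeout block or an old epoch (so the freshly proposed $(e,1)$ timeout block plus $(e,2),(e,3)$ normal blocks are needed, not three blocks from a pre-existing segment). Getting the constants to match $(1sec+7\Delta+1min)\cdot f + 4\cdot(4min+2\Delta) + (2sec+18\Delta+1min)$ exactly will require tracking each message hop and timer reset precisely, but the structure of the argument is the three-part decomposition above.
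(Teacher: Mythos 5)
Your outline gets the two easy parts right (epoch synchronization via \Timeout/\Sync and charging each dishonest proposer at most $1sec+7\Delta+1min$ before all honest nodes time out past it), and these match the paper's Lemmas \ref{F1}, \ref{L14}, \ref{L20} and the first part of the theorem's proof. But there is a genuine gap in your third step: you assume that once a \emph{single} epoch with an honest proposer is reached, more than $\frac{2}{3}|N|$ honest nodes will vote for its timeout block and the subsequent normal blocks, and you treat the remaining terms of the bound as generic ``slack.'' This is exactly the point the paper's key lemma (Lemma \ref{L22}) is designed to handle, and it does not go through for one honest epoch alone: a dishonest proposer from an earlier epoch may have withheld the notarization (votes) of the last block of the longest notarized chain and can release it \emph{selectively}, just before voting, to a subset $A$ of honest nodes. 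Those nodes then hold a notarized chain strictly longer than the parent chain of the honest proposer's proposal (if the proposer is outside $A$), so they refuse to vote, no block is notarized in that epoch, and the honest proposer's slot is wasted. You name this obstacle at the end (``dribbling out notarizations'') but do not supply the idea that resolves it: after such a sabotaged epoch the adversary has no withheld notarization left and the \Sync exchange equalizes all honest nodes' longest chains, so it is the \emph{next} epoch that is guaranteed to succeed --- hence the paper proves the statement for \emph{two consecutive epochs with honest proposers} (Lemma \ref{L22}, whose case analysis (b.1), (b.2.a), (b.2.b) is precisely this attack), and then uses the round-robin rule together with $f<\frac{1}{3}|N|$ to guarantee that two consecutive honest proposers occur within the stated period. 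This is also where the $(2sec+18\Delta+1min)$ term comes from: it is the worst-case duration of the two back-to-back honest-proposer epochs, not slack for ``a few good-epoch attempts.''

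A secondary inaccuracy: the $4\cdot(4min+2\Delta)$ term is likewise not padding; in the paper it is the content of Lemmas \ref{L211}, \ref{L21} and \ref{L23} (within $4min+2\Delta$ every honest node either gains a notarized block or advances its epoch, and four such rounds either finalize a block or force an epoch change), which your proposal has no counterpart for. Also note that finalization inside a good epoch needs the timeout block $(e,1)$ \emph{plus three} normal blocks $(e,2),(e,3),(e,4)$ (the finalization rule requires $B_1.s>1$), so the per-epoch timing is $1sec+9\Delta$ rather than the $1sec+6\Delta$ you estimate; this is minor, but together with the missing two-consecutive-honest-proposer argument it means your decomposition cannot reproduce the stated bound.
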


Liveness is guaranteed during period of synchrony. We consider $f < \frac{1}{3} |N|$ dishonest nodes, that is, at most $f$ nodes can be corrupt proposers and voters controlled by an adversary. So, in the above theorem, we define the minimum length of period of synchrony that is long enough to guarantee that at least four consecutive blocks (first being the timeout block and the other three being the normal blocks) will be proposed and voted upon. When this happens, as the above theorem shows, at least one new block (that was not finalized before) will be finalized before the period of synchrony ends. In other words, the finalized chain of each honest node at time $t_1$ will be longer than its finalized chain at time $t_0$.

The lemmas---along with their proofs---required to prove the main liveness theorem as well as the proof of the main liveness theorem (\cref{{L23L}}) are provided in \cref{LP}.
\section{Performance} 
\label{perf} \vivek{in caption for figures i have mentioned DELAY(scale for exponential distribution) as 0.5 and 2. not sure, in figures how else to mention SCALE for exponential delay. Can you please update it if you have a better way in mind  } 

\begin{figure}[t] 
    \centering
    \includegraphics[width=\columnwidth]{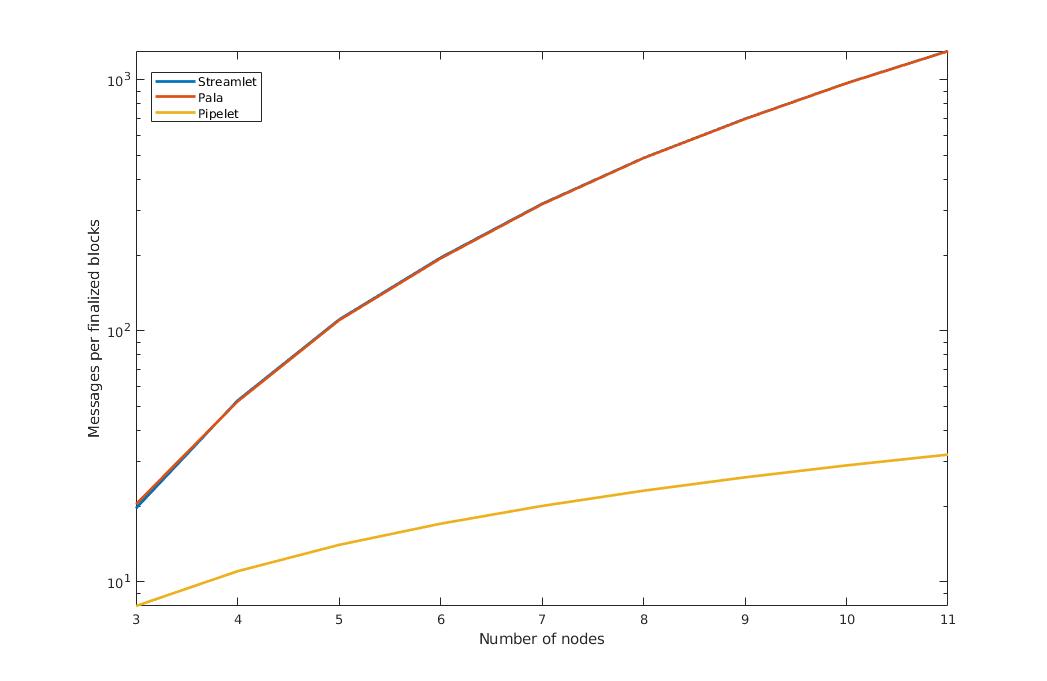}
    \caption{Messages per finalized blocks for average delay of 0.5.} 
    \label{fig:mess_0.5}   
\end{figure}

\begin{figure}[t]
    \centering
    \includegraphics[width=\columnwidth]{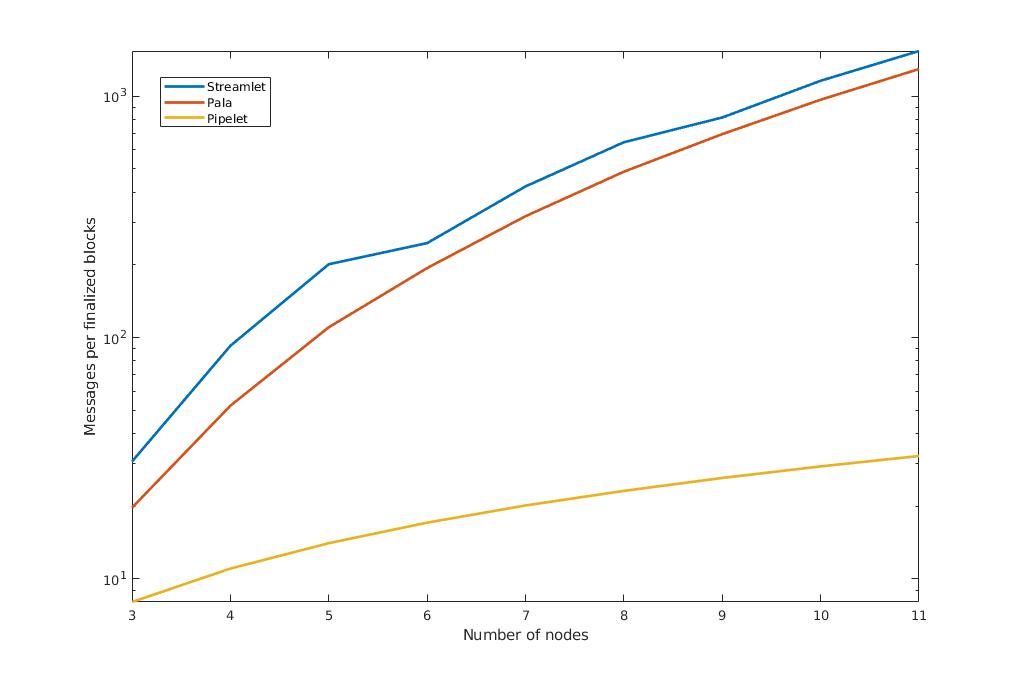}
    \caption{Messages per finalized blocks for average delay of 2.} %
    \label{fig:mess_2}
\end{figure}

\begin{figure}[t]
    \centering
    \includegraphics[width=\columnwidth]{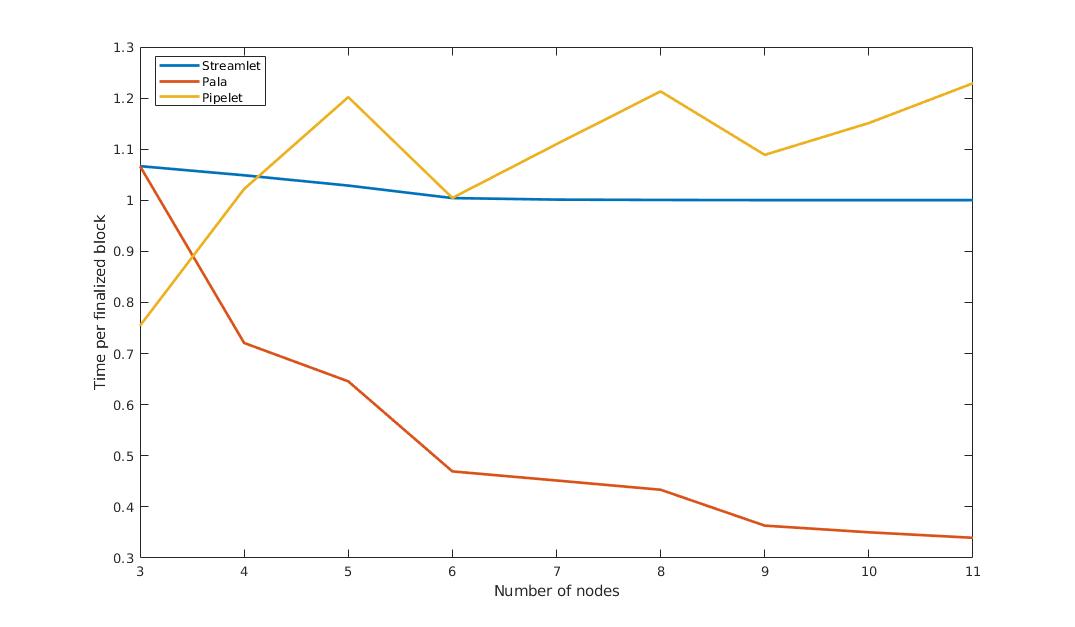}
    \caption{Time per finalized blocks for average delay of 0.5.} %
    \label{fig:time_0.5}
\end{figure}

\begin{figure}[t]
    \centering
    \includegraphics[width=\columnwidth]{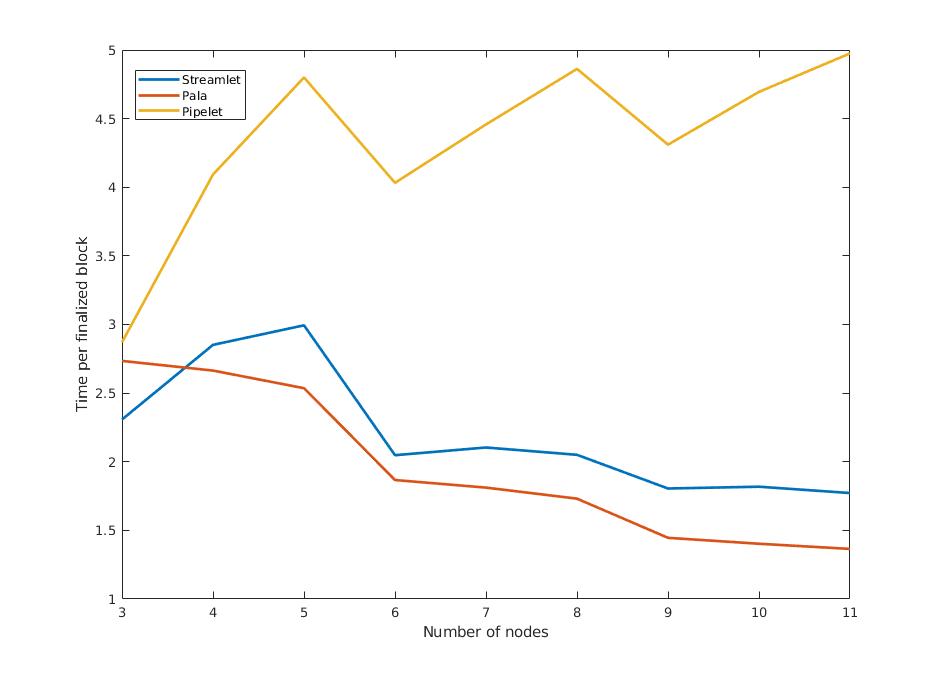}
    \caption{Time per finalized blocks for average delay of 2.} %
    \label{fig:time_2}
\end{figure}

We have simulated Streamlet~\cite{chan2020streamlet}, PaLa~\cite{chan2018pala}, Pipelet using Python language outside period of synchrony. To provide real world delays we use exponential distribution in message passing in the form of \href{https://numpy.org/doc/stable/reference/random/generated/numpy.random.exponential.html}{numpy.random.exponential}. 
In \cref{fig:mess_0.5} and \cref{fig:mess_2} we have plotted total number of messages send in the simulation per finalized block vs number of nodes. The average delay in message transmission per message is more for experiment conducted for results in \cref{fig:mess_2} as compared to experiment conducted in \cref{fig:mess_0.5}. In \cref{fig:mess_0.5}, as the number of nodes increases, we see number of messages send per finalized block is almost similar for Streamlet~\cite{chan2020streamlet} and PaLa~\cite{chan2018pala} while in \cref{fig:mess_2} PaLa~\cite{chan2018pala} requires to send less number of messages per finalized block. This is because PaLa~\cite{chan2018pala} a proposer can propose multiple blocks per epoch while in Streamlet~\cite{chan2020streamlet} we have the proposer can only propose one block per epoch. Moreover, both in \cref{fig:mess_0.5} and \cref{fig:mess_2}, we see Pipelet requires messages to be transmitted per finalized block as compared to both Streamlet~\cite{chan2020streamlet}, PaLa~\cite{chan2018pala}. This is because both Streamlet~\cite{chan2020streamlet}, PaLa~\cite{chan2018pala} require implicit echoing of messages, while in Pipelet there is no implicit echoing required.

In \cref{fig:time_0.5} and \cref{fig:time_2} we have plotted total time spend by all nodes per finalized block in the simulation versus number of nodes. The average delay in message transmission per message is more for experiment conducted for results in \cref{fig:time_2} as compared to experiment conducted in \cref{fig:time_0.5}. Both in \cref{fig:time_0.5} and \cref{fig:time_2}, as the number of nodes increases, we see time spend by nodes in simulation per finalized block for Pipelet is more than both PaLa~\cite{chan2018pala} and Streamlet~\cite{chan2020streamlet}. It is because implicit echoing leads to more blocks proposed in both PaLa~\cite{chan2018pala} and Streamlet~\cite{chan2020streamlet} as compared to Pipelet.
Also, for Pipelet both in \cref{fig:time_0.5} and \cref{fig:time_2} we see a up and down zig zag result as the number of nodes increase. This is because of the reason, if the number of nodes in the simulation is not a perfect multiple of three, then the percentage of nodes required to notarize a block is more as compared to when the total number of nodes is a perfect multiple of three. Hence the up and down zig zag line plot for Pipelet.

\section{Conclusion}
\label{sec:concl}

We described a simple and performant BFT-style consensus protocol, called \Pipelet, that is based on familiar rules such as extending longest chain, and finalizing the  middle of three consecutive normal notarized blocks, and stable leader for performance. 
The \Pipelet protocol is conceptually different to Streamlet~\cite{chan2020streamlet} and PaLa~\cite{chan2018pala}, in the sense \Pipelet does not have implicit echoing which significantly reduces the message cost required to finalize a block in normal scenario to $O(N)$ instead of $O(N^3)$ for Streamlet~\cite{chan2020streamlet} and PaLa~\cite{chan2018pala}. Moreover, \Pipelet combines the of advantages of Streamlet and Pala in terms of simplicity, performance, and practicality. In addition to it, from the experimental results in \ref{perf}, we conclude that both Streamlet~\cite{chan2020streamlet} and PaLa~\cite{chan2018pala} messages increase exponentially with the increase number of nodes. So, as compared to Pipelet both Streamlet~\cite{chan2020streamlet} and PaLa~\cite{chan2018pala} are impractical for networks with large number of nodes.
We also provided detailed specification of the assumptions and the protocol, and proved its consistency and liveness under partial synchrony.

 \bibliographystyle{ACM-Reference-Format}
\bibliography{main}
\begin{appendix}

\section{Proof of Consistency} 
\label{CP}

For the the rest of the paper, we assume $B.es$ denote the pair $\langle e, s  \rangle$ for a block~$B$.

\begin{lemma}[Uniqueness of each sequence number]\label{L12}
Let us assume the following (a) more than $\frac{2}{3} |N|$ nodes are honest, (b) $\chain'$ and $\chain''$ be two chains notarized by the voting nodes belonging to $N$, observed by all honest nodes such that $\chain'[-1].es = \chain''[-1].es$, it must be that $\chain'$ = $\chain''$. Please recall that the $es$ suffix in $\chain'.es$ and $\chain''.es$ represent their respective epoch and sequence number combination of that block. 
\end{lemma}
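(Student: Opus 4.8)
The plan is to prove this by induction on the length of the chains, combined with a quorum-intersection argument. First I would observe that since $\chain'[-1].es = \chain''[-1].es$, both final blocks carry the same epoch number $e$ and sequence number $s$. The base case is $s = 1$: here both $\chain'[-1]$ and $\chain''[-1]$ are timeout blocks for epoch $e$ (or $e = 1$ with the genesis-extending first block), and the eligible proposer for epoch $e$ is uniquely determined by the round-robin rule ($e \bmod |N|$). Since a block is notarized only if more than $\frac{2}{3}|N|$ voting nodes vote for it, and any two such quorums intersect in at least one honest node, and an honest node votes at most once for a given $\langle e, s\rangle$ pair (condition 1 of the Vote rule), there can be at most one notarized block with epoch/sequence $\langle e, 1\rangle$ proposed by the honest proposer. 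If the epoch-$e$ proposer is dishonest, the argument still goes through: the honest node in the quorum intersection voted for only one block at $\langle e,1\rangle$, so $\chain'[-1] = \chain''[-1]$ as blocks (same hash, same contents), hence $H(\chain'[-1]) = H(\chain''[-1])$ and collision-resistance forces the parent blocks to coincide, so $\chain' = \chain''$ by downward induction along the hash pointers.

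For the inductive step, suppose $\chain'[-1].es = \chain''[-1].es = \langle e, s\rangle$ with $s > 1$. By the same quorum-intersection argument, there is an honest node $n$ that voted for both $\chain'[-1]$ and $\chain''[-1]$; since $n$ votes at most once per $\langle e, s\rangle$ pair, we get $\chain'[-1] = \chain''[-1]$ as blocks. In particular $\chain'[-1].h = \chain''[-1].h$, so $H(\chain'[-2]) = H(\chain''[-2])$, and by collision-resistance $\chain'[-2] = \chain''[-2]$. Now I need to argue that $\chain'[-2]$ and $\chain''[-2]$ are themselves notarized blocks with equal $es$ so that I can either recurse or invoke the induction hypothesis: $\chain'[-2]$ is on a notarized chain observed by honest nodes (it is a prefix of the notarized chain $\chain'$), and similarly for $\chain''$, and they have the same $es$ since they are literally the same block. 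Applying the induction hypothesis (on shorter chains) to $\chain'[1:|\chain'|-1]$ and $\chain''[1:|\chain''|-1]$ yields that these prefixes are equal, and combined with $\chain'[-1] = \chain''[-1]$ we conclude $\chain' = \chain''$.

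The main obstacle I anticipate is carefully justifying the claim that a notarized chain's proper prefixes are themselves "notarized and observed by all honest nodes" in the precise sense the lemma's hypothesis requires — because notarization in this protocol is node-local (each node maintains its own $\textit{Notarized}$ set based on votes it has seen), and a block being notarized in one honest node's view does not a priori mean it is notarized in every honest node's view. I would handle this by noting that the relevant fact is only that \emph{some} honest node has seen $\frac{2}{3}|N|$ votes for each block on the chain (which is what "notarized chain observed by honest nodes" should be taken to mean), and that the existence of such a vote quorum is a global fact about which votes were cast; the quorum-intersection argument then only needs the existence of the votes, not uniformity of views. A secondary subtlety is the interaction with validity of the chain (normal vs.\ timeout blocks): I should note that chain validity guarantees each $\chain[i]$ is either a normal or timeout block, so the $es$ of $\chain[i-1]$ is determined by the $es$ of $\chain[i]$ together with the block type, which is exactly what makes the downward induction along hash pointers track matching $es$ values.
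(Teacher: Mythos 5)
Your proposal is correct and follows essentially the same route as the paper, which simply defers to Lemma~5 of PaLa: a quorum-intersection argument showing some honest node would have to vote twice for the same $\langle e, s\rangle$ pair (contradicting the vote-once rule), after which equality of the entire chains follows from the hash pointers and collision resistance. The only remark is that your induction on chain length is unnecessary scaffolding—one application of the quorum-intersection step to the last blocks, followed by the downward walk along hash pointers, already yields $\chain' = \chain''$.
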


The lemma statement is same as {Lemma 5} of PaLa~\cite{chan2018pala}.

\begin{proof}
This lemma has the exact same proof as Lemma 5 of the PaLa~\cite{chan2018pala} paper because we employ the same constraint of having more than $\frac{2}{3} |N|$ honest nodes, and the proof is unaffected by the differences in the design of the \Pipelet protocol. 
\end{proof}

Based on the preceding lemmas, we now can prove our main consistency theorem.

\begin{proof}[Proof of \cref{T13}] 
\label{pt13}
In the proof below $\chain'$ represent the longest notarized chain out of which the finalized chain $\chain_1$ was obtained and similarly $\chain''$ represent the longest notarized chain out of which the finalized chain $\chain_2$ was obtained.

For contradictions sake, suppose the finalized version of $\chain'$ and finalized version of $\chain''$ are not prefix of each other and let $B_0$ be the last common block of the finalized chains with $i_0$ be the index such that $B_{0} = \chain'[i_{0}] = \chain''[i_{0}]$ which has epoch number $e_{0}$. For each $\chain'$ and $\chain''$, we define  $PivotBeginIndex$ block and $PivotEndIndex$ block as follows. 

If all three blocks in $\chain'[i_{0} : i_{0} + 2]$ have the same epoch number and if $\chain'[i_{0}].s \neq 1$, then the $PivotEndIndex$ block $B_2$ in $\chain'$ is $\chain'[i_{0} + 2]$ and the $PivotBeginIndex$ block $B_1$ is $\chain'[i_{0} + 1]$. If $\chain'[i_{0}].s := 1$ and all four blocks in $\chain'[i_{0} : i_{0} + 3]$ have the same epoch number, then the $PivotEndIndex$ block $B_2$ in $\chain'$ is $\chain'[i_{0} + 3]$  and the $PivotBeginIndex$ block $B_1$ is $\chain'[i_{0} + 1]$. Otherwise, suppose $e$ is the smallest epoch in $\chain'$ strictly larger than $e_0$ such that there is a block in chain with sequence number $\langle e, 4 \rangle$ then, in this case, the $PivotEndIndex$ block $B_2$ in chain is the block with sequence number $\langle e, 4 \rangle$ and the $PivotBeginIndex$ block $B_1$ is $\langle e, 1 \rangle$.

Observe that $B_1$ and $B_2$ have the same epoch number, and they are both after $B_0$ in $\chain'$. We define the blocks $B_{1}'$ and $B_{2}'$ in the same manner for $\chain''$. 

To prove the \cref{T13}, we use contradiction in the below cases. Please note both $\chain'[i_{0} + 1]$ and $\chain''[i_{0} + 1]$ can not be normal blocks, because then they have the same epoch, sequence number combination. And then that will contradict Lemma \ref{L12}. Hence, for the rest of the proof, we can assume that at least one of $\chain'[i_{0} + 1]$ and $\chain''[i_{0} + 1]$ is a timeout block.

Lets us define $\langle e_{0}, s_{0} \rangle  := B_{0}.es$, $\langle e, s \rangle  := B_{2}.es$ and $\langle e', s' \rangle  := B_{2}'.es$ without loss of generality and assume $ e \leq e'$

    \textbf{Case 1.} 
            $e = e'$ : The current case $e_{0} = e = e'$ leads to both $\chain[i_{0} + 1]$ and $\chain'[i_{0} + 1]$ being normal blocks, which was already shown to be impossible. Hence, the case $e_{0} < e = e'$ implies that both chains have timeout blocks $B$ and $B'$ after $B_{0}$ with sequence number $(e, 1)$ and this contradicts Lemma~\ref{L12}.

    \textbf{Case 2.} $e < e'$: We divide it further in two cases: (1) $e_{0} = e$, and (2) $e_{0} < e$.
    \begin{enumerate}
        \item \label{T13_a} Consider the case $e_{0} = e$. It must be the case that in $\chain'$, the normal blocks $B_{1} = \chain'[i_{0} + 1]$ and $B_{2} = \chain'[i_{0} + 2]$ have the same epoch number $\langle e_{0} , s_{0} + 1 \rangle $ and $\langle e_{0} , s_{0} + 2 \rangle $ respectively. 
        Furthermore, since $B_{2}$ is notarized by the nodes belonging to the voting set $N_v$, there must have been a set $S \subseteq N_{v}$ of more than $\frac{1}{3} |N_v|$ honest nodes, each of which must have notarized $B_2$ in its local epoch $e = e_{0}$ . This means that during its local epoch $e_0$, such an honest node in $S$ must have seen the full notarization of the prefix of chain up to $B_1$. 
        
        Now the block $B'_{*}$ $= \chain''[i_{0}+1]$ must be a timeout block and suppose it has a epoch, sequence number combination $\langle e'',1 \rangle$ for some epoch $e''>e_0$. 
        
       Since epoch $e'' > e$ and epochs do not decrease with time, so by the beginning of $\langle e'',1 \rangle $, every honest node in $S$ would have observed the block $B_1$ which is at a length $|B_1| > |B_0|$. Thus every honest node in $S$ would have observed a notarized chain ending at $B_1$ which is longer than the notarized chain ending at $B_0$. Thus less than $\frac{2}{3} |N_v|$ vote for block $B'_{*}$
        
        \item   Consider the case $e_{0} < e$. In this case, the stabilized block $B_1$ is a timeout block and will have epoch, sequence number $\langle e, 1 \rangle $ and $B_2$ has sequence number $\langle e,4 \rangle $. Since $B_2$ is notarized by the nodes belonging to the voting set $N_v$, there must be a subset $S \subseteq N_v$ of more than $\frac{1}{3} |N_v|$ honest nodes, each of which has seen the fully notarized prefix of $\chain'$ up to $B_1$ during its local epoch $e$.
        
        Observe that $e < e'$ implies that $\chain''$ must have some timeout block from $\chain''[i_{0} + 1]$ to $B_{1}'$ whose epoch number is strictly larger than $e$. Hence, consider the timeout block $B_{*}'$ in $\chain''$  with sequence number $\langle  e'' , 1 \rangle $ such that $e''$ is the minimum epoch larger than $e$.  Now it can have to sub~cases:
        \begin{enumerate}
            \item $B'_{*}$ is the same block as $\chain''[i_{0}+1]$: Contradiction can be proved in this case using the same logic as above Case 2. point-(\ref{T13_a}) 
            \item $B'_{*}$ is not the same block as $\chain''[i_{0}+1]$: Let $B''_{*}$ represent the block $\chain''[|B'_{*}|-1]$. That is block $B''_{*}$ represent the previous block to $B'_{*}$ in $\chain''$. Let us assume block $B''_{*}$ has epoch number $e'''$ and because block $B'_{*}$ which is the immediate next block to block $B''_{*}$ is a timeout block, thus $e''' < e''$. Now next need to prove $e''' \neq e$. Now, the timeout block $B_1$ with epoch $e$ belongs to $\chain'$. All the normal block belonging to epoch $e$ must also belong to $\chain'$. This is because when every honest node in $N_c$ receives the normal block proposal such that the normal block belongs to epoch $e$, every honest node must have already observed $\chain'$ as one of the longest notarized chain. Now, since block $B''_{*}$ with epoch $e'''$ does not belong to $\chain'$, $e''' \neq e''$. So, since we already know $e''$ is the minimum epoch of a block in $\chain''$ larger than $e$, and $(e'''<e'') \land (e''' \neq e)$, it leads to the conclusion $e'''<e$. Let $B_k$ represent the block $\chain'[|B_{1}|-1]$, that is $B_k$ represents the previous block of $B_1$ in $\chain'$. Consider the following three cases:
            \begin{enumerate}
                \item \label{equal_case}  Length of prefix chain of $B_1$ is equal to length of prefix chain of $B'_*$, that is $|B_{k}|$ is equal to $|B''_{*}|$: This implies, $|B_1|>|B''_{*}|$. Since $B_{2}$ is notarized by the voting set $N_v$, there must have been a set $S \subseteq N_{c}$ of more than $\frac{1}{3} |N_v|$ honest nodes, each of which must have notarized $B_1$ in its local epoch $\langle e,1 \rangle$ . This means that during its local epoch $\langle e,1 \rangle$, such an honest node in $S$ must have seen the full notarization of the prefix of chain up to $B_1$. 
        
               Since epoch $e'' > e$ and epochs do not decrease with time, so by the beginning of $\langle e'',1 \rangle $, every honest node in $S$ would have observed a fully notarized prefix of chain up to $B_1$ which is longer than the fully notarized prefix of chain up to $B''_{*}$. Therefore less than $\frac{2}{3} |N_v|$ will vote for $B'_{*}$. Hence the contradiction.

                \item  Length of prefix chain of $B_1$ is less than the length of prefix chain of $B'_*$, that is $|B_{k}| < |B''_{*}|$: Furthermore, since $B'_{*}$ is notarized by the voting set $N_v$, there must have been a set $S \subseteq N_{v}$ of more than $\frac{1}{3} |N_v|$ honest nodes, each of which must have notarized $B''_{*}$, such an honest node in $S$ must have seen the full notarization of the of chain $\chain'$ up to~$B''_{*}$. 
        
               Since epoch $e > e'''$ and epochs do not decrease with time, by the beginning of $\langle e,1 \rangle $, every honest node in $S$ would have observed a fully notarized prefix of chain up to $B''_{*}$ which is longer than the fully notarized prefix of chain up to $B_k$. Therefore less than $\frac{2}{3} |N_v|$ will vote for $B_1$. Hence the contradiction

                \item  Length of prefix chain of $B_1$ is greater than the length of prefix chain of $B'_*$, that is, $|B_{k}| > |B''_{*}|$: 
                We already proved contradiction for the case when $|B_{k}|$ is equal to $|B''_{*}|$, see Case 2.\ref{equal_case}. Using a similar logic, contradiction can be proved in the current case.
            \end{enumerate}        
    \end{enumerate}    
    \end{enumerate}
\end{proof}

\section{Proof of Liveness} 
\label{LP}

\begin{lemma} \label{F1}
Let us assume the following (a) more than $\frac{2}{3} |N|$ nodes are honest, (b) let $[t_{0}, t_{1}]$ denote a period of synchrony. It must be that if some honest node $n$ is in local epoch $e$ at some time $r \in [t_{0} - \Delta, t_{1}-\Delta]$, then by time $r+\Delta$, no honest nodes is in epoch $e'$ such that $e' < e$. 
\end{lemma}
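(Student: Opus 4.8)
The plan is to trace back how the honest node $n$ could have reached epoch $e$ and to exploit the fact that, per Algorithm~\ref{alg:sharedF1}, an honest node re-broadcasts the aggregated timeout signatures at the very instant it advances. First I would dispose of the trivial case $e = 1$: every node initializes its epoch to $1$ and epoch numbers are non-decreasing, so no honest node is ever in an epoch $e' < 1$ and the claim is vacuous. Hence assume $e > 1$ from now on.

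Since $n$ is honest, its epoch variable is modified only by the reception handler for $\Timeout$ messages in Algorithm~\ref{alg:sharedF1}, it is non-decreasing, and it starts at $1 < e$; therefore there is a time $r' \le r$ at which $n$ executed that handler and set its epoch to $e$. At that moment the inner guard held, so $n$ possessed an aggregate of signatures from at least $\frac{2}{3}|N|$ voting nodes on epoch $e$, and $n$ immediately broadcast $\Timeout\langle e, \textit{TimeoutSignatures}[e]\rangle$ carrying exactly that aggregate. Because signatures cannot be forged, any honest recipient that processes this message ends up, after the aggregate step, with $|\textit{TimeoutSignatures}[e]| \ge \frac{2}{3}|N|$ as well.

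Next I would bound the delivery time. The broadcast occurred at $r' \le r \le t_1 - \Delta < t_1$, so the period-of-synchrony definition applies and every honest node receives the message by local time $\max\{t_0,\, r' + \Delta\}$. Since $r' \le r$ we have $r' + \Delta \le r + \Delta$, and since $r \ge t_0 - \Delta$ we have $t_0 \le r + \Delta$; hence $\max\{t_0,\, r' + \Delta\} \le r + \Delta$, so every honest node has received and processed $n$'s $\Timeout\langle e, \cdot\rangle$ message by time $r + \Delta$. On processing it, an honest node whose current epoch is still below $e$ sees that the message advertises a strictly larger epoch, enters the inner conditional, passes the $\ge \frac{2}{3}|N|$ threshold check, and sets its epoch to $e$; an honest node already at epoch $\ge e$ needs no action. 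In either case no honest node is in an epoch $e' < e$ at time $r + \Delta$, which is the claim.

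The only genuinely delicate point is the sub-case $r' < t_0$, where $n$ advanced to epoch $e$ before the synchrony window opened and thus re-broadcast while the network was still asynchronous; this is salvaged by the $\max\{t_0,\cdot\}$ term in the period-of-synchrony definition together with the hypothesis $r \ge t_0 - \Delta$ --- which is exactly why the lemma restricts $r$ to $[t_0 - \Delta,\ t_1 - \Delta]$ rather than to all of $[t_0,t_1]$. I would make a point of writing out that inequality chain explicitly; the rest is routine bookkeeping about which line of Algorithm~\ref{alg:sharedF1} fires and when.
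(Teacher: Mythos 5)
Your proof is correct and follows essentially the same route as the paper's: node $n$ must hold at least $\frac{2}{3}|N|$ timeout signatures for epoch $e$ by time $r$, these reach every honest node within $\Delta$ under the period-of-synchrony assumption, and each such node then advances to (at least) epoch $e$. You are in fact more explicit than the paper on the two points it glosses over --- that the delivery mechanism is $n$'s immediate re-broadcast of the aggregated signatures in Algorithm~\ref{alg:sharedF1} (there is no implicit echoing in \Pipelet), and that the $\max\{t_0, r'+\Delta\}$ term combined with $r \geq t_0 - \Delta$ handles the case where $n$ advanced before the synchrony window opened.
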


\begin{proof}
We know a honest node $n$ is in epoch $e$ at time $r$. Then, we can conclude that at most by time $r$, node $n$ must have entered epoch $e$. For a honest node to enter a epoch it requires $\frac{2}{3}|N|$ $\Signatures$ for that particular epoch. Hence node $n$ must have received for epoch $e$ at most by time $r$. So, next we can conclude every other honest node must have received $\frac{2}{3}|N|\Signatures$ for epoch $e$ at most by time $r+\Delta$. Hence, no honest node can be in epoch $e'<e$ at most by time $t+\Delta$.
\end{proof}

\begin{lemma}\label{L14} 
Let us assume the following (a) more than $\frac{2}{3} |N|$ nodes are honest, (b) let $[t_{0}, t_{1}]$ denote a period of synchrony. Suppose that at some time $t \in (t_{0} + 1sec+11\Delta, t_{1} ]$, some honest node $n$ is at most $1sec+10\Delta$ into its local epoch $e$. Then, at time $t$ no honest node is in local epoch $e'$ where $e' > e$.
\end{lemma}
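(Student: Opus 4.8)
The plan is to argue by contradiction: suppose some honest node $n'$ is in local epoch $e' > e$ at time $t$, while honest node $n$ is at most $1sec + 10\Delta$ into its local epoch $e$ at that same time $t \in (t_0 + 1sec + 11\Delta, t_1]$. I would first use the epoch-advancement rule to trace back the cause of $n'$ being in epoch $e'$: since $e' > e \geq 1$, node $n'$ must have received $\textit{TimeoutSignatures}[e']$ from at least $\frac{2}{3}|N|$ nodes, of which more than $\frac{1}{3}|N|$ are honest; each such honest node signed epoch $e'$ (i.e., broadcast a $\Timeout\langle e', \cdot\rangle$ or forwarded aggregated signatures) at some time before $t$. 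Because a node only signs epoch $e+1$ of its own current epoch after its $\textit{NotarizationTimer}$ reaches $1min$, and $e' > e$ forces at least one honest node to have signed an epoch strictly greater than $e$, I want to pin down the earliest time $r$ at which \emph{any} honest node signed some epoch $\hat{e} > e$.

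The key step is to show this time $r$ must be before $n$ entered epoch $e$, and then derive a contradiction with \cref{F1}. Here is the intended chain of reasoning. Let $r$ be the first time an honest node, say $m$, signs an epoch $\hat{e} > e$; by minimality, when $m$ did so, $m$ was in some local epoch $\hat{e}-1 \geq e$, so in fact $m$ was in epoch exactly $e$ or higher — but if $m$ were in an epoch strictly greater than $e$ at time $r$, then by \cref{F1} (applied carefully within the period of synchrony, accounting for the $\Delta$ slack) node $n$ would be forced out of epoch $e$ well before time $t$, or could never have been only $1sec+10\Delta$ into epoch $e$ at time $t$; so $m$ is in epoch $e$ at time $r$, and signing $\hat{e}=e+1$ required $m$'s $\textit{NotarizationTimer}$ to reach $1min$ within epoch $e$, meaning $m$ entered epoch $e$ at least $1min$ before $r$. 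Now since $n$ is at most $1sec+10\Delta$ into epoch $e$ at time $t$, node $n$ entered epoch $e$ no earlier than $t - 1sec - 10\Delta > t_0 + \Delta$. Meanwhile, for $n'$ to be in epoch $e' > e$ at time $t$, the $\frac{2}{3}|N|$ signatures on $e'$ — and in particular $m$'s earliest offending signature at time $r$ — must have been produced, aggregated, and delivered to $n'$ by time $t$, which forces $r \leq t - \Delta$ (using synchrony to bound the propagation from the last contributing honest signer to $n'$). Combining, $m$ entered epoch $e$ by time $r - 1min \leq t - \Delta - 1min$, whereas $n$ entered epoch $e$ no earlier than $t - 1sec - 10\Delta$; since $1min \geq 30\Delta$, we get that $m$ entered epoch $e$ strictly before $t_0$, contradicting the fact (via \cref{F1}) that honest nodes' epochs are consistent within $\Delta$ — more precisely, $m$ being in epoch $e$ before $t_0+\Delta$ while $n$ only reaches epoch $e$ near time $t$ is fine, but the real contradiction is that $m$'s $\textit{NotarizationTimer}$ reaching $1min$ within epoch $e$ requires $m$ to have seen no new notarized block on its longest chain for $1min$, yet the epoch-$e$ timeout block and any growth $n$ relied on to still be "early" in epoch $e$ cannot coexist; I would tighten this into the bound that no honest node signs any $\hat e > e$ before $n$ has been in epoch $e$ for at least $1min - (\text{a few }\Delta)$, which exceeds $1sec + 10\Delta$, giving the contradiction.

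The main obstacle I anticipate is bookkeeping the $\Delta$ offsets cleanly: the definition of period of synchrony only gives delivery by $\max\{t_0, t+\Delta\}$, so messages sent just before $t_0$ enjoy no bound, and I must ensure every signature and aggregation I invoke was sent \emph{within} $[t_0, t_1]$ (or argue the relevant honest signer was already synchronized by then). The second delicate point is the interaction between "signing epoch $e+1$" (which needs $\textit{NotarizationTimer} = 1min$ and hence $\textit{EpochTimer} \geq 1min$ too, since both reset together) and the hypothesis that $n$ is only $1sec+10\Delta$ into epoch $e$: I need the lemma's constant $1sec+10\Delta$ to be genuinely smaller than the $1min$-minus-slack threshold, which it is because $1min \geq 30\Delta \geq 5sec$ under the paper's assumptions, leaving comfortable room. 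I would present the argument as: (1) reduce to the first honest signer $m$ of an epoch $>e$ at time $r$; (2) show $m \in$ epoch $e$ at time $r$ via \cref{F1}; (3) lower-bound $r$ by ($m$'s entry into epoch $e$) $+\, 1min$; (4) upper-bound ($n$'s entry into epoch $e$) and lower-bound it by $t - 1sec - 10\Delta$; (5) use \cref{F1} plus the arithmetic $1min \geq 30\Delta$ to collide the two bounds and conclude no such $n'$ exists.
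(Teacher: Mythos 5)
Your proposal is correct and takes essentially the same route as the paper's (much terser) proof: it combines \cref{F1} with the observation that reaching an epoch beyond $e$ requires more than $\frac{1}{3}|N|$ honest \Timeout signatures, each of which presupposes its signer spent $1min$ in an epoch at least $e$, which is incompatible with node $n$ being only $1sec+10\Delta$ into epoch $e$ at time $t$. The paper compresses this into a few lines; your version merely makes the first-offending-signer extraction, the pre-$t_0$ caveat, and the $\Delta$-bookkeeping explicit.
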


\begin{proof}
Now we know node $n$ enters epoch $e$ at time $t-1sec-10\Delta$. Every other honest node must have entered epoch $e$ at most by time $t-1sec-9\Delta$. And no honest node will send $\Timeout$ message by time $t$ because they are not even $1min$ deep in their epoch. Hence at time $t$ no honest node is in a epoch $e'>e$.
\end{proof}
\begin{lemma}[Progress in periods of synchrony]\label{L16} %
(Progress in periods of synchrony). Let us assume the following: (a) more than $\frac{2}{3} |N|$ nodes are honest; (b) time interval $[t_{0}, t_{1}]$ is a period of synchrony; (c) dishonest nodes do not increase the length of voting honest nodes' longest notarized chains via $\Sync$ messages after the timeout block gets proposed and before the voting nodes vote on the proposed timeout block; (d) if a dishonest node chooses to grow a honest nodes' longest notarization chain via a $\Sync$ message before a timeout block is proposed, then it will do that for all the honest nodes before a timeout block is proposed; (e) an honest node $u$ is the proposer of epoch $e$, and it first entered epoch $e$ at some time $r \in (t_{0} + 1sec, t_{1} - 2sec)$; and (f) node $u$ proposes an epoch-$e$ block at some time $t$ where $r+1sec \leq t < t_{1} - 1sec$. Then, by time $t + 3\Delta$,

\begin{enumerate}[label=(\alph*)] 
    \item every honest node will have seen a notarization for a proposed timeout block at most by $t+3\Delta$;
    \item every honest node will have seen a notarization for a proposed normal block at most by $t+2\Delta$;
    \item every honest node must still be in epoch $e$ by $t+11\Delta$;
    \item node $u$ must have proposed more epoch-$e$ blocks by $t+11\Delta$.
\end{enumerate}
\end{lemma}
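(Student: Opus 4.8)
The strategy is to trace the protocol's message flow during the period of synchrony, exploiting the $\Delta$-delay bound to show that honest nodes act in lockstep (up to $\Delta$ skew). First I would establish that when the honest proposer $u$ broadcasts its epoch-$e$ block $B$ at time $t$, every honest node receives it by $t+\Delta$. I then need to argue that each honest node, upon receipt, finds $B$ valid and votes for it. The two validity checks in \cref{alg:vote} are: the node is in local epoch $e$, and all blocks preceding $B.s$ are notarized in its view. For the epoch condition, I would invoke \cref{F1} and \cref{L14}: since $u$ entered epoch $e$ at time $r \in (t_0+1sec, t_1-2sec)$ and $t \geq r+1sec$, no honest node has advanced past $e$ yet (no timeouts fire this early—this needs $t$ close enough to $r$, which is where conditions (e),(f) and the $1min$ timeout come in), and by \cref{F1} none lag behind $e$ either, so all honest nodes are in epoch $e$ when they receive $B$. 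For the "preceding blocks notarized" condition, I would use assumptions (c) and (d): these are precisely engineered so that the honest nodes' longest notarized chains are synchronized at the relevant moment—either the dishonest nodes grew all honest chains uniformly before the timeout proposal, or they didn't interfere after it—so the piggyback $Votes[B]$ in the $\Proposal$ message suffices to notarize the parent for every honest node.

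Next, once all honest nodes vote, their votes reach $u$ (and, for the notarization to propagate, $u$ must re-disseminate—actually here I'd check the protocol: notarization spreads via the proposer collecting $\frac{2}{3}|N|$ votes, so I need $u$ to receive $>\frac{2}{3}|N|$ votes by roughly $t+2\Delta$, then propose the next block, which carries the notarization of $B$ piggyback to all honest nodes by $t+3\Delta$). This gives part (a) for a timeout block (the extra $\Delta$ accounting for $u$ needing to assemble and rebroadcast) and part (b) for a normal block (one fewer round trip, since... actually I'd need to recount: the difference between $t+3\Delta$ and $t+2\Delta$ presumably reflects whether an extra synchronization hop is needed for the timeout case). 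Parts (c) and (d) then follow: since honest nodes are still within $\approx 11\Delta \ll 1min$ of entering epoch $e$, no honest node's $\textit{NotarizationTimer}$ reaches $1min$ (and the chain has just grown, resetting it anyway), so no $\Timeout$ is broadcast and, combined with \cref{L14}'s contrapositive and \cref{F1}, every honest node remains in epoch $e$ through $t+11\Delta$; and since $u$ sees all of $B$'s votes by $t+2\Delta$, the \textbf{Do Propose} trigger in \cref{alg:proposer} (condition $\textit{EpochTimer}>1sec \land e=B.e$) fires, so $u$ proposes the successor block well before $t+11\Delta$.

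The main obstacle I anticipate is carefully handling the validity check about preceding blocks being notarized, and more specifically pinning down exactly why the piggyback mechanism plus assumptions (c)/(d) guarantee that \emph{every} honest node (not just a majority) can notarize the parent. A malicious proposer in a prior epoch could have left honest nodes with divergent longest notarized chains, and a dishonest node could selectively feed $\Sync$ messages; assumptions (c) and (d) are the hypotheses that neutralize exactly this, but I will need to argue precisely that under them the $Votes[B]$ carried in the $\Proposal$ references a block that is already in—or becomes, via the same proposal—notarized in every honest node's view, so that the second voting precondition holds uniformly. A secondary subtlety is the bookkeeping of the constants ($3\Delta$, $2\Delta$, $11\Delta$): I would verify each by explicitly listing the hops (broadcast of proposal: $\Delta$; votes back to $u$: $\Delta$; rebroadcast of notarization via next proposal or the wait for $1sec$: accounting for the remaining slack), but these are routine once the lockstep structure is in place. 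I expect the proof to be a case analysis on whether $u$'s epoch-$e$ proposal at time $t$ is the timeout block (the first block of the epoch) or a later normal block, with the timeout-block case being the delicate one.
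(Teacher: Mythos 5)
Your overall skeleton matches the paper's argument: use \cref{F1} and \cref{L14} to place all honest nodes in epoch $e$ when the proposal arrives, count hops (proposal in $\Delta$, votes back to $u$ by $t+2\Delta$, notarization piggybacked to everyone with the next proposal by $t+3\Delta$), and dispose of claims (c) and (d) via \cref{L14} and the fact that the chain grows long before any $1min$ timer fires. However, the crux of the lemma is exactly the step you flag as an ``anticipated obstacle'' and then leave unresolved, and your sketched resolution of it is aimed at the wrong condition. The voting rule in \cref{alg:vote} requires not only that the parent of the proposed timeout block be notarized in the voter's view (which the piggybacked $\textit{Votes}[B]$ indeed supplies), but also that the parent lie in the voter's $\textit{Longest}$ set, i.e.\ that no honest node already holds a strictly longer notarized chain than the chain $u$ extends; if some honest node did, it would refuse to vote and the $\frac{2}{3}|N|$ quorum could fail. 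Saying that assumptions (c) and (d) make the chains ``synchronized at the relevant moment'' names the hypotheses but not the mechanism.

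The paper closes this gap with a specific protocol argument you do not give: an honest node can only advance to epoch $e$ after $\frac{2}{3}|N|$ \Timeout signatures accumulate, so every honest node broadcasts a \Timeout message before the epoch change; by \cref{alg:sharedF1}, each received \Timeout triggers every honest node to broadcast a \Sync message carrying its longest notarized chains. Since $u$ waits $1sec \geq 5\Delta$ after entering epoch $e$ before proposing, all of these \Sync messages have arrived, so the parent chain $u$ extends is at least as long as any honest node's longest notarized chain at the moment of proposal; assumptions (c) and (d) are then invoked only to ensure that dishonest \Sync messages cannot lengthen some (but not all) honest nodes' chains between the proposal and the votes. Without this \Timeout-then-\Sync argument, the claim that every honest node votes for the timeout block---and hence claims (a), (b), and consequently (d)---is unsupported, so as it stands the proposal has a genuine gap at its central step.
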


\begin{proof}
Let $\langle e, s \rangle$ denote the epoch and sequence number of the block $B$ proposed. Claim (d) is directly implied by Claims (a) and (b). Claim (c) follows readily from Lemma \ref{L14}. 

We proceed to prove Claim (a).
For timeout block, consider $s = 1$.
Due to Lemma \ref{F1} and Lemma \ref{L14}, by time $r + \Delta$, all honest nodes must be in epoch $e$; and by time $r + 1 + \Delta $, they must have received the timeout proposal at $\langle e,1 \rangle $.

As soon as node $u$ first enters epoch $e$ (recall that $r$ denotes this time), it waits for $1 sec$ and then proposes block at $\langle e,1 \rangle $. To show that every honest node will receive a notarization for the timeout block at most by $r + 1sec + 3\Delta$, it suffices to show that the parent chain of the timeout block at $\langle e, 1 \rangle $ extends from is at least as long as any honest node’s longest notarized chain when it first enters local epoch $e$. 

We know by time $t$, that is before a timeout block gets proposed every honest node $i$,  has received the parent chain the proposed block extends from and its notarization. Moreover $i$ is in local epoch $e$. It would suffice to show that $i$ will vote for the proposal when this happens (and this holds for every honest node $i$). To show this, it suffices to prove that the parent chain denoted $\chain_p$ the honest proposal extends from is at least as long as any honest node’s longest notarized chain before the honest proposer starts to propose.
 
Now, if the parent chain $\chain_p$ of the timeout block ends at epoch $\langle e', l \rangle $ where $l$ is the sequence number of the last block in the $\chain_p$ for epoch $e' < e$. 
  
When we have a honest proposer and no dishonest node alters the length of longest notarized chain of honest nodes, a honest node moves to the next epoch as soon as it receives at least $\frac{2}{3} |N|$ \emph{signatures} from different voting nodes indicating their wish to move to the next epoch. It can happen in the only scenario when honest nodes' longest notarized chain does not grows for the current epoch in the past $1min$.

So when a honest node's longest notarized chain does not grows in past $1min$, it broadcast a $\Timeout$ message. When a node broadcast a $\Timeout$ message  all other honest nodes at the reception of $\Timeout$ message will send back the set of their longest notarized chains. Hence each honest node must receive the longest notarized chain, since it will broadcast $\Timeout$ message at least once before moving to the next epoch. Hence, before the honest proposer for epoch~$e$ proposes $\langle e,1 \rangle$, all honest node receive the current set of longest notarized chains in existence. Hence, the parent chain, the proposer of epoch $e$ proposes a timeout block from must be at least as long as the longest notarized chain of any other honest node.
  
Hence, all honest nodes will vote will vote for timeout block $\langle e,1 \rangle$. Now, the votes from voting node will reach back to the proposer node at most by $t+2\Delta$. Hence, timeout block for the proposer node will get notarized in $t+2\Delta$ time. Next the proposer node will propose block with epoch $\langle e,2 \rangle$ and will broadcast votes for timeout block $\langle e,1 \rangle$ along with the proposal. It will take at most $t+3\Delta$ for votes to reach the other voting nodes. Hence, the timeout block $\langle e,1 \rangle$ gets notarized for the other honest voting nodes at most by $t+3\Delta$.      
Now, during period of synchrony the timeout block gets proposed and notarized in time at most by $t+3\Delta$ for all honest nodes. And the notarized parent chain $\chain_p$, the normal block gets proposed from must be observed by all honest nodes by time $t+3\Delta$. Now next we need to make sure $\chain_p$ is at least as long as other voting nodes longest notarized chain(if there are multiple, pick one arbitrary). All nodes at time $t+3\Delta$ are still in epoch $e$ because $1min$ has not elapsed since the last notarization. Hence, when honest voting nodes receives proposal for block $\langle e,2 \rangle$ at most by time $t+3\Delta$, the parent chain $\chain_p$, $\langle e,2 \rangle$ extends from must be as long as the any other voting nodes longest notarized chain(if multiple, pick one arbitrarily). Hence all honest voting nodes will vote for $\langle e,2 \rangle$ and will send their votes to proposer node by time $t+4\Delta$. Thus, at time $t+4\Delta$ the normal block $e,2$ gets notarized for the proposer node. Next the proposer node will propose block with epoch $\langle e,3 \rangle$ and will broadcast votes for timeout block $\langle e,2 \rangle$ along with the proposal. It will take at most $t+5\Delta$ for votes to reach the other voting nodes. Hence, the normal block $\langle e,2 \rangle$ gets notarized for the other honest voting nodes at most by $t+5\Delta$.
  
Above $t+3\Delta$ is the time at which the first normal gets proposed and at most by $t+5\Delta$ it will be notarized for all voting nodes. Hence, after the normal gets proposed, it gets notarized for all honest nodes at most by $2\Delta$ time after it gets proposed.
  
Similarly, any normal block proposed by the honest proposer in period of synchrony will get notarized for any honest node at most by $2\Delta$ time after it was proposed by the honest proposer.
\end{proof}

\begin{lemma} \label{L211} Let us assume that more than $\frac{2}{3} |N|$ nodes are honest. Then, at any time, if there exist an honest node with a longest notarized chain of length $l$, then there exist at least one other honest node with a longest notarized chain of length at least $l-1$.
\end{lemma}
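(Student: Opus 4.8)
The plan is to argue that a notarized block at length $l$ can only come into existence because at least $\frac{2}{3}|N|$ voting nodes voted for it, and that among those voters there are more than $\frac{1}{3}|N|$ honest nodes; each such honest node, at the moment it voted, must have had a notarized parent chain of length $l-1$ in its view (this is the core voting precondition: a node votes for a block only if all preceding blocks on its chain are notarized). So I would first fix a time $t$ and an honest node $n$ whose longest notarized chain $\chain$ has length $l$, and let $B = \chain[-1]$ be its tip. Since $B$ is notarized in $n$'s view, $n$ has seen $\geq \frac{2}{3}|N|$ votes on $B$; since more than $\frac{2}{3}|N|$ nodes are honest, a counting/quorum-intersection argument gives an honest voter $m$ for $B$.

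Next I would trace back to when $m$ cast its vote for $B$, at some time $t' \le t$. By the voting rule (the $\Proposal$-handling event in \cref{alg:vote}, condition that all blocks preceding $B$ are notarized, together with $\exists B' \in \textit{Longest}$ with $h' = H(B')$ and $|B| = |B'|+1$), node $m$ at time $t'$ had in its $\textit{Notarized}$ set the whole parent chain of $B$, which has length $l-1$; hence $m$'s longest notarized chain at time $t'$ had length at least $l-1$. Then I would invoke monotonicity: the length of an honest node's longest notarized chain never decreases over time (blocks are only ever added to $\textit{Notarized}$, never removed), so at the later time $t$ node $m$ still has a longest notarized chain of length $\geq l-1$. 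If $m \ne n$ we are done; and we may take $m \ne n$ because the quorum of honest voters for $B$ has size more than $\frac{1}{3}|N| \ge 1$, so if it happens to contain $n$ we can still pick a different honest voter unless $|N|$ is tiny — more cleanly, note that $n$ itself already witnesses length $l$, and any honest voter $m$ for $B$ (whether or not $m=n$) witnesses length $\ge l-1$; the statement only needs \emph{one} honest node distinct from $n$, so I would simply observe that the honest-voter quorum for $B$ intersects the honest set in more than $\frac{1}{3}|N| \ge 2$ nodes whenever $|N| \ge 4$, and handle $|N| \le 3$ (i.e.\ $f = 0$) directly.

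The main obstacle I anticipate is making the monotonicity-of-longest-chain claim fully rigorous, since it is used implicitly throughout the liveness section but is not stated as a separate lemma in the excerpt: I would need to confirm that no code path ever removes a block from $\textit{Notarized}$, and that therefore $\max_{B \in \textit{Notarized}} |B|$ is non-decreasing in real time for every honest node. A secondary subtlety is the edge case where the only honest voters for $B$ might coincide with $n$ — resolved by the quorum-size bound above, or by noting $f = 0$ makes the claim trivial since then all nodes are honest and agree. Everything else is a routine application of the voting precondition plus quorum intersection, so I would keep those parts brief.
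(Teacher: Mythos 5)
Your proposal is correct and follows essentially the same argument as the paper: a notarized block at length $l$ carries votes from at least $\frac{2}{3}|N|$ nodes, quorum intersection yields honest voters, and the voting precondition forces each honest voter to have already had the notarized parent chain of length $l-1$ in its view. The extra care you take with monotonicity of $\textit{Notarized}$ and with picking a voter distinct from $n$ only makes explicit what the paper's terser proof leaves implicit.
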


\begin{proof}
Suppose a honest node has a longest notarized chain of length $l$ at time $t$. Then, it must have received votes from at least $\frac{2}{3}|N|$ voting nodes out of which at least $\frac{1}{3}|N|+1$ voting nodes are honest. Hence, these honest nodes whose votes led to the notarization of the block at length $l$ must already have a longest notarized chain of length at least $l-1$. 
\end{proof}
 
\begin{lemma} \label{L21} Let us assume the following: (a) more than $\frac{2}{3} |N|$ nodes are honest, (b) let $[t_{0}, t_{1}]$ denote a period of synchrony, and (c) an honest node $i$ enters epoch $e$ at time $t \in [t_{0}+\Delta, t_{1} - 4min -2\Delta]$. Then, during the time interval $[t,t+ 4min + 2\Delta]$, the honest node $i$'s longest notarized chain either grows at least by one new notarized block or the honest node moves to a later epoch.
\end{lemma}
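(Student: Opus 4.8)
The plan is to argue by contradiction. Suppose that throughout the whole interval $[t,t+4min+2\Delta]$ node $i$ stays in epoch $e$ and its longest notarized chain never grows; write $l$ for the length of that chain. I will show that this forces one of the two disjuncts to occur anyway. Note first that the length bound on the interval guarantees $t+4min+\Delta\le t_1-\Delta$, so every message I appeal to is sent well inside the period of synchrony and is delivered within $\Delta$.

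\emph{Pinning down everyone's epoch.} Applying \cref{F1} to $i$ (which is in epoch $e$ at time $t$), no honest node is in an epoch below $e$ after time $t+\Delta$. In the other direction, no honest node is in an epoch above $e$ at any time up to roughly $t+4min+\Delta$: if some honest node $j$ were, then either \cref{F1} applied to $j$ would put every honest node --- in particular $i$ --- out of epoch $e$ before the end of the interval, contradicting our assumption; or, if $j$ only crossed into that larger epoch during the interval, then on crossing it broadcasts the aggregated $\ge\frac23|N|$ timeout-signatures for that epoch (\cref{alg:sharedF1}), which reach $i$ within $\Delta$ and make $i$ advance as well. Either way we contradict the assumption, so every honest node sits in epoch exactly $e$ throughout (roughly) $[t+\Delta,\,t+4min+\Delta]$.

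\emph{The timeout/sync feedback.} Since $i$ never changes epoch and its chain never grows, its $\textit{NotarizationTimer}$ (reset at time $t$) fires at each of $t+1min,t+2min,t+3min,t+4min$, so $i$ broadcasts $\Timeout\langle e+1,\cdot\rangle$ at those times (\cref{alg:vote}). Each such message makes every honest node broadcast a $\Sync$ message carrying its current longest notarized chain(s) (\cref{alg:sharedF1}), which then reaches $i$ within a further $\Delta$. Because notarized chains never shrink, this means that if \emph{any} honest node ever held a chain longer than $l$ at a time in $[t,t+4min]$, node $i$ would learn it by $t+4min+2\Delta$ and its own chain would grow --- contradiction, so the first disjunct would already hold. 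Hence every honest chain has length at most $l$ during $[t,t+4min]$, and since the same $\Sync$ rounds also deliver $i$'s own length-$l$ chain to everyone, from then on every honest chain has length exactly $l$ and can no longer grow. Consequently no honest $\textit{NotarizationTimer}$ is reset after that (no epoch change, no length increase), so within the following minute each honest node's timer reaches $1min$ and it broadcasts $\Timeout\langle e+1,\cdot\rangle$ while still in epoch $e$. Since more than $\frac23|N|$ nodes are honest and we are in a period of synchrony, $i$ aggregates $\ge\frac23|N|$ signatures for epoch $e+1$ and advances --- contradicting the assumption that $i$ stays in epoch $e$. A bit of bookkeeping confirms everything happens comfortably before $t+4min+2\Delta$, the slack absorbing the $1sec$ proposer wait, the various $\Delta$ message lags, and the corner cases where an adversary delivers a chain growth near the end of the interval.

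\emph{Main obstacle.} The delicate point is exactly the one the threat model calls out: a dishonest node may keep honest $\textit{NotarizationTimer}$s alive by dribbling out previously-withheld notarizations (of prefix blocks of the length-$l$ chain) through $\Sync$ messages, resetting a node's timer each time, which a priori could be repeated many times and delay timeouts well past $4min$. The argument above must close this by exploiting that $\Sync$ messages are \emph{broadcast} and that $i$'s own timeouts force a fresh round of $\Sync$ broadcasts every minute: any chain growth anywhere among the honest nodes is therefore relayed back to $i$ almost at once, so the adversary cannot simultaneously keep $i$'s chain at length $l$ and keep honest timers from firing. Turning this into a clean quantitative statement --- ruling out a long run of successive resets while fitting everything inside the stated $4min+2\Delta$, and handling the seam between the asynchronous prefix (where \cref{F1} has no force) and the period of synchrony --- is where the real work lies.
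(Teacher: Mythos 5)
Your overall strategy is the same one the paper uses---argue by contradiction, let node $i$'s unreset $\textit{NotarizationTimer}$ fire every minute, use the resulting \Timeout$\to$\Sync feedback so that either a chain longer than $l$ is relayed back to $i$, or every honest node eventually times out and $i$ collects $\geq\frac{2}{3}|N|$ signatures for epoch $e+1$. But the middle deduction that is supposed to force the second horn is not justified, and you concede as much in your final paragraph. Concretely: (1) the claim that ``the same \Sync rounds also deliver $i$'s own length-$l$ chain to everyone'' does not follow from \cref{alg:sharedF1} --- a node broadcasts \Sync only upon \emph{receiving} a \Timeout, so $i$'s own timeouts cause the others to send their chains to $i$, not the reverse; other honest nodes may well remain strictly below length $l$. (2) More importantly, the conclusion ``every honest chain has length exactly $l$ and can no longer grow, consequently no honest $\textit{NotarizationTimer}$ is reset'' is false as stated: the threat model explicitly allows a dishonest node to release withheld notarizations selectively and at chosen times, so another honest node's chain can repeatedly grow \emph{up to} length $l$ (never beyond it, so nothing new ever reaches $i$), and each such growth resets that node's timer. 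This staggering of resets is precisely what can delay the accumulation of $\geq\frac{2}{3}|N|$ \Timeout signatures for $e+1$, and ruling it out within the stated $4min+2\Delta$ window is the entire content of the lemma, not bookkeeping.

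The paper closes exactly this gap with its nested case analysis (Cases (a), (b.1), (b.2.a), (b.2.b.1), (b.2.b.2)), using \cref{L211} to guarantee that some honest node is within one block of $i$'s length and then walking through successive one-minute rounds: after the first \Sync round the lag is at most one block, so each adversarial release buys at most one further reset per node, and after at most four rounds either $i$'s chain grows or enough honest timeouts have accumulated (note that a \Timeout signature, once broadcast, is never retracted by a later timer reset) --- which is where the $4min+2\Delta$ bound comes from. To complete your proof you would need to carry out essentially this counting argument; as written, the proposal identifies the right mechanism but leaves the decisive quantitative step open.
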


\begin{proof}
Since the honest node $i$ enters epoch $e$ at time $t$, all other honest nodes must have entered epoch $e$ at most by $t+\Delta$. 

For contradiction purpose let us assume during the time interval $[t,t+ 4min + 2\Delta]$, the longest notarized chain of the honest node $i$ neither grows nor the honest node moves to a later epoch.

Let us divide the solution in two sub cases: \textbf{Case (a)}, where node $i$'s longest notarized chain(if multiple pick one arbitrary) is shorter than the longest notarized chain (if multiple pick one arbitrary) of a honest node which has the longest notarized chain among all the honest nodes; and \textbf{Case (b)}, where node $i$ longest notarized chain(if many pick one arbitrary) is as long as any honest node's longest notarized chain (if multiple pick one arbitrary).

\textbf{Case (a)}: 
Since for the sake of contradiction we assume that node~$i$ does not receive any new notarizations and also does not move to a later epoch within the time period $t+4min+2\Delta$, this leads node $i$ to broadcast $\Timeout$ message at time $t+1min$. Once other nodes receive the $\Timeout$ message, they will broadcast $\Sync$ message containing a set of their longest notarized chains. Hence, node $i$ will receive other nodes' longest notarized chains. Thus its longest notarized chain will grow in time $t+ 1min + 2\Delta$. Hence the contradiction proven false.

\textbf{Case (b)}:
By Lemma \ref{L211}, the current case can be further divided into two sub cases: \textbf{Case (b.1)} where node $i$ and at least one other honest node have the same longest notarized chain among all the honest nodes, and \textbf{Case (b.2)} where at least one other honest node has a longest notarized chain that is one block shorter than the longest notarized chain of honest node $i$ and none of the honest nodes have a longest notarized chain that is equal to the length of the longest notarized chain of honest node $i$. 

\textbf{Case (b.1)}:
Again similar to reasons provided in \textbf{case (a)}, node $i$ will broadcast a $\Timeout$ message. Once other honest nodes receive the $\Timeout$ message from node $i$, they will immediately broadcast $\Sync$ message containing the set of their longest notarized chains. So, at most by $t+1min+2\Delta$ all honest nodes must have received the longest notarized chains, longest among all the honest nodes. At most by time $t+1min+2\Delta$ tho honest nodes whose longest notarized chain grew, their $\textit{NotarizationTimer}$ will be reset to zero. Hence, at most by time $t+2min+2\Delta$ either node $i$ adds a new notarized block to one of it's longest notarized chain or else all honest nodes will broadcast a $\Timeout$ message and node $i$ at most by time $t+2min+3\Delta$ will enter a later epoch. Hence, the contradiction is proven false.    

\textbf{Case (b.2)}:
Again similar to reasons provided in \textbf{Case (a)}, node $i$ will broadcast a $\Timeout$ message. Once other honest nodes receives the $\Timeout$ message from node $i$, they will immediately broadcast their set of longest notarized chains. So, at most by $t+1min+2\Delta$ all honest nodes must have received the longest notarized chain that is one length shorter than the longest notarized chain among all the honest nodes. So, at most by $t+1min+2\Delta$ all honest nodes other than node $i$ have a longest notarized that is one block short than the node $i$'s longest notarized chain. 

It can be further divided in two sub cases: \textbf{Case (b.2.a)}, where none of the honest nodes longest notarization grows, and \textbf{Case (b.2.b)}, where other than honest node $i$, at least $(\frac{1}{3}|N|+1)$ honest nodes' longest notarized chain grows by at least one block at most by time $t+2min +2\Delta$. 

\textbf{Case (b.2.a)}: If none of the honest nodes' longest notarized chain grows then at most by time $t+2min+2\Delta$, all honest nodes will broadcast $\Timeout$ message and post it, node $i$ will move to a later epoch at most by time $t+2min+4\Delta$. Hence, the contradiction.

\textbf{Case (b.2.b)}: If at least $(\frac{1}{3}|N|+1)$ honest nodes' (other than node $i$) longest notarized chain grows at most by time $t+2min +2\Delta$. And the honest nodes whose longest notarized chain did not grow at most by $t+2min +2\Delta$ will broadcast a $\Timeout$ messages and will receive the longest notarized chains of other honest nodes at most by time $t+2min+4\Delta$. Hence all honest nodes' will have the longest notarized chain of same length at most by time $t+2min+4\Delta$ as long as the longest notarized chain(if multiple pick one arbitrary) among all the honest nodes.
 
It can be further divided in two sub cases: \textbf{Case (b.2.b.1)}, where none of the honest nodes longest notarized chain grows, and \textbf{Case (b.2.b.2)}, where among all honest nodes (other than the node $i$), at least for $(\frac{1}{3}|N|+1)$ honest nodes', their longest notarized chain grows by at least one block at most by time $t+3min +2\Delta$. 
 
\textbf{Case (b.2.b.1)}: All honest nodes will broadcast $\Timeout$ messages. Node $i$ will move to later epoch at most by time $t+3min+6\Delta$. Hence, contradiction proven false.
 
\textbf{Case (b.2.b.2)}: Node $i$ will broadcast a $\Timeout$ message at time $t+3min$. All other honest nodes will receive it at most by $t+3min +\Delta$.At most by time $t+3min +\Delta$, honest nodes other than node $i$ will broadcast their longest notarized chains. Hence, node $i$'s longest notarized chain will grow at most by $t+3min+2\Delta$. Hence contradiction proven wrong.
 
Suppose other honest nodes' longest notarized chain does not grows by $t+3min+\Delta$ but instead grows post it before time $t+3min+2\Delta$, then node $i$'s longest notarized chain does not grow. Also, all honest nodes other than node $i$ will have the longest notarized chain of same length at most by $t+3min+8\Delta$, which is now one block longer than node $i$'s longest notarized chain. Moreover, all nodes will stay in the current epoch because no node received $\frac{2}{3}|N|$ signatures from different voting nodes for them to move to a later epoch. 
 
Next, node $i$ will broadcast $\Timeout$ message at time $t+4min$. It will receive the longest notarized chain of other other honest nodes at most by time $t+4min+2\Delta$. Hence, the longest notarized chain of node $i$ grows. Thus, the contradiction proven false.
\end{proof}

\begin{lemma} \label{L23} Let us assume the following: (a) more than $\frac{2}{3} |N|$ nodes are honest, (b) timer interval $[t_{0}, t_{1}]$ is a period of synchrony, (c) an honest node $i$ enters epoch $e$ at time $t \in [t_{0}+\Delta, t_{1} - 4 \cdot (4min + 2\Delta)]$. Then during the time interval $[t,t+ 4 \cdot ( 4min + 2\Delta)]$, the honest node $i$'s finalized chain either grows at least by one new finalized block or the honest node moves to a later epoch.
\end{lemma}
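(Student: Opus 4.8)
The plan is to apply \cref{L21} four times in succession, once for each of the four sub-intervals of length $4min+2\Delta$ that tile $[t,\,t+4 \cdot (4min+2\Delta)]$. Each application has one of two outcomes: node $i$ advances to a strictly later epoch within that sub-interval, or node $i$'s longest notarized chain gains at least one new notarized block. If node $i$ ever advances its epoch the conclusion holds immediately, so the remaining task is the case in which node $i$ stays in epoch $e$ throughout the whole interval; there I must exhibit a new finalized block.

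The key step is to control which blocks the four guaranteed growths contribute. As soon as node $i$ enters epoch $e$ at time $t$ it broadcasts $\Timeout\langle e,\cdot\rangle$ (see \cref{alg:sharedF1}), every honest recipient answers with a $\Sync$ message carrying its longest notarized chains, and since $[t,t_1]$ is a period of synchrony node $i$'s longest notarized chain is, by time $t+2\Delta$, at least as long as that of every honest node; hence after $t+2\Delta$ it can only grow through genuinely new notarizations. Such a new notarization cannot be of a block whose epoch exceeds $e$ --- that would require more than $\frac{1}{3}|N|$ honest nodes to vote while in an epoch $>e$, which by \cref{F1} drags node $i$ out of epoch $e$ --- and, after $t+\Delta$, cannot be of a block whose epoch is below $e$ either, since by \cref{F1} all honest nodes are already in epoch $\ge e$ and honest nodes never vote for blocks of past epochs. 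Therefore the four new blocks appended to node $i$'s longest notarized chain all have epoch number $e$; being four consecutive blocks of a valid chain, they carry consecutive sequence numbers and at most the first of them (the one with sequence number $1$) is a timeout block.

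It follows that by time $t+4 \cdot (4min+2\Delta)$ node $i$ sees three consecutive \emph{normal} notarized blocks $\langle e,s^*\rangle,\langle e,s^*+1\rangle,\langle e,s^*+2\rangle$ with $s^*\ge 2$ (the block one earlier on the chain has sequence number $s^*-1\ge 1$) --- exactly the run guaranteed by \cref{L16} when the epoch-$e$ proposer is honest, the only alternative otherwise being the epoch advance we already discarded. The ``On notarization of block $B$'' rule of \cref{alg:sharedF1} then fires with $B_1=\langle e,s^*\rangle$, $B_2=\langle e,s^*+1\rangle$, $B=\langle e,s^*+2\rangle$: all three are notarized, $B_1.s=s^*>1$, and $B_2$ was not in node $i$'s $\textit{Finalized}$ set at time $t$ because node $i$'s longest notarized chain at time $t$ did not even reach $B_1$. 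Hence $B_2$ and its entire prefix enter $\textit{Finalized}$, so node $i$'s finalized chain at time $t+4 \cdot (4min+2\Delta)$ is strictly longer than at time $t$, as required.

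The part I expect to be the main obstacle is making the second paragraph fully rigorous. \cref{L21} is stated for a node at the instant it \emph{enters} an epoch, whereas here it is used four times for a node that merely \emph{remains} in epoch $e$; and one must rule out an adversarial schedule in which a dishonest proposer spends some of node $i$'s growth steps on catching it up to a stale notarized chain --- one that may end several epochs in the past and contain no three consecutive normal blocks --- rather than on fresh epoch-$e$ blocks. Both gaps close the same way: the forced $\Timeout$-on-entry broadcast and the ensuing $\Sync$ replies equalize all honest nodes' longest notarized chains within $2\Delta\ll 4min$, so catch-up never consumes one of the four $4min+2\Delta$ sub-intervals, and \cref{L21}'s proof only uses that all honest nodes reach epoch $\ge e$ shortly after $t$, so it transfers to the ``remains in epoch'' setting unchanged. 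Pinning down this timing bookkeeping --- in particular, that the four growths really are four fresh epoch-$e$ blocks --- is the crux.
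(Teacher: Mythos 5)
Your proposal takes essentially the same route as the paper's own proof: apply \cref{L21} across the four sub-intervals of length $4min+2\Delta$, and in the case where node $i$ never advances its epoch, conclude that the four guaranteed growths are one timeout block plus three consecutive normal epoch-$e$ blocks, which triggers the finalization rule. The paper's proof is in fact terser than yours---it simply asserts the ``timeout and three normal blocks'' claim without the epoch-pinning and catch-up bookkeeping you identify as the crux---so the gaps you flag are not closed in the paper either.
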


\begin{proof}
By Lemma \ref{L21} a honest node either adds at least one new block to it's longest notarized chain in time interval $(4min + 2\Delta)$ or it moves to a later epoch. 

If the honest node $i$ does not move to a later epoch in the time interval time interval $4 \cdot (4min + 2\Delta)$ then it will add at least four new(timeout and three normal blocks) notarized block to it's longest notarized chain.

Hence, $i$ will add a new block to it's finalized chain at most by time interval $t+ 4 \cdot ( 4min + 2\Delta)$.

But, if node $i$ does not add four new notarized blocks within the time interval $[t,t+ 4 \cdot ( 4min + 2\Delta)]$, then by Lemma \ref{L21}, it will move to a later epoch.
\end{proof}
\begin{lemma} \label{L20} Let us assume the following: (a) more than $\frac{2}{3}|N|$ nodes are honest, (b) let $[t_{0},t_{1}]$ denote a period of synchrony, (c) epoch $e$ has an honest proposer, (d) last honest node enters epoch $e$ at time $t \in [t_{0}+\Delta, t_{1}-\Delta]$, then any honest node at time $t + \Delta$ will have a longest notarized chain of same length that is at most one block shorter than any node's longest notarized chain.

\end{lemma}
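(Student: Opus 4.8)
The plan is to exploit the $\Timeout$/$\Sync$ machinery that fires on every epoch change, together with Lemma~\ref{L211}. \textbf{Step 1 (timing scaffold).} Since assumption~(d) gives $t \ge t_0+\Delta$, I would first show that the \emph{first} honest node to enter epoch~$e$ does so at some time $r \in [\,t-\Delta,\,t\,] \subseteq [t_0,t_1]$: that node broadcasts $\Timeout\langle e,\textit{TimeoutSignatures}[e]\rangle$ with a full quorum of signatures on entry, and by the clock-synchronization rule (cf.\ Lemma~\ref{F1}) every honest node still in an earlier epoch adopts epoch~$e$ within~$\Delta$, so $t\le r+\Delta$. Consequently all honest nodes are in epoch~$e$ by time~$t$, and every $\Timeout\langle e,\cdot\rangle$ and $\Sync$ message exchanged from time~$r$ on is delivered within~$\Delta$. \textbf{Step 2 (the $\Sync$ round).} Every honest node, on receiving any $\Timeout$ message, immediately broadcasts a $\Sync$ carrying the tips of its longest notarized chain together with the aggregated votes that notarize them; feeding this the round of full-quorum $\Timeout\langle e,\cdot\rangle$ messages sent on entry to epoch~$e$, I would argue that by time $t+\Delta$ every honest node has received, from every honest node~$m$, a $\Sync$ reflecting $m$'s longest notarized chain as of some time in $[r,t]$, so that after merging these into $\textit{Notarized}$ its own longest notarized chain is at least as long.

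\textbf{Step 3 (closing the one-block gap).} It remains to pass from ``as long as $m$'s chain was when it last answered a $\Timeout$'' to ``within one block of $m$'s chain now.'' Here assumption~(c) is essential: the epoch-$e$ proposer enters epoch~$e$ no earlier than $t-\Delta$ and waits $1sec=5\Delta$ before its first proposal, so no epoch-$e$ block is notarized in any honest view before $\approx t+6\Delta$; hence at time $t+\Delta$ every honest node's longest notarized chain consists of blocks from epochs $<e$, and any block of length $\ell$ notarized in \emph{any} node's view by then (including one a dishonest node reveals selectively via $\Sync$) received its honest votes before those voters left their epochs, i.e.\ by $\approx t$. By the voting rule such a block has at least $\frac13|N|+1$ honest witnesses of its length-$(\ell-1)$ prefix; applying Lemma~\ref{L211} at the tip, those honest nodes hold chains of length $\ge\ell-1$, and by Step~2 these chains reach every honest node by $t+\Delta$. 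Taking $L$ to be the maximum length of any node's longest notarized chain at time $t+\Delta$, every honest node then has a longest notarized chain of length at least $L-1$, which is the claim.

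\textbf{Main obstacle.} The delicate point is the timing in Step~3: a $\Sync$ is emitted only in response to a received $\Timeout$, so I must pin down exactly which $\Timeout$ broadcast (the natural candidate is the one the last honest node emits on entering epoch~$e$ at time~$t$) triggers the decisive $\Sync$ round, and verify, using ordered point-to-point delivery together with the $\Delta$ bound, that it closes the window by $t+\Delta$. This is also exactly where Lemma~\ref{L211} is indispensable: without it a dishonest node revealing a long notarized chain to only part of the network could a priori open a gap of more than one block, and Lemma~\ref{L211} is what caps that gap at a single block.
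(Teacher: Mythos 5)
Your overall route is the same as the paper's: (i) all honest nodes enter epoch $e$ during $[t-\Delta,\,t]$; (ii) a $\Timeout$-triggered round of $\Sync$ messages puts every honest node's longest notarized chain into every honest view by $t+\Delta$; (iii) the residual one-block gap to an arbitrary (possibly dishonest) node's chain is closed by noting that the tip of any such chain carries more than $\frac{1}{3}|N|$ honest votes, and those honest voters already held the notarized length-$(\ell-1)$ prefix when they voted --- this is exactly the paper's ``set $A$'' argument (the same reasoning as Lemma~\ref{L211}). Your additional observation that assumption (c) and the $1sec$ proposer wait rule out any epoch-$e$ notarization before $t+\Delta$ is a correct detail that the paper leaves implicit.

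The one point that needs fixing is precisely the one you flag as the main obstacle, and your ``natural candidate'' is the wrong trigger. If the decisive $\Sync$ round were the one provoked by the $\Timeout$ that the last honest node emits on entering epoch $e$ at time $t$, the responding $\Sync$s would be broadcast only by $t+\Delta$ and delivered only by $t+2\Delta$, missing the claimed bound. The paper instead uses the fact that each honest node's \emph{own} entry into epoch $e$ is itself caused by the reception of some $\Timeout$ message, and in \cref{alg:sharedF1} that very reception handler broadcasts the node's $\Sync$ (its longest notarized blocks and their votes) before advancing the epoch; hence every honest node has broadcast its $\Sync$ by its own entry time, which is at most $t$, and these $\Sync$s are delivered by $t+\Delta$. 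Moreover, since honest votes on pre-epoch-$e$ blocks are cast before the voter enters epoch $e$, each honest voter's length-$(\ell-1)$ prefix is already reflected in the $\Sync$ it sends at entry, so the gap-closing step in your Step~3 inherits the same $t+\Delta$ deadline. With that substitution your argument coincides with the paper's proof.
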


\begin{proof}
Let us denote a epoch $e'$, which represent the previous epoch from which honest nodes move to a later epoch~$e$.

There are two ways for the longest notarized chain of a honest node to grow (a)it receives a proposal block from the proposer and that gets notarized (b) when it receives a $\Sync$ message. 

Let us consider the following happens in epoch $e'$. For a proposed block to get notarized it must require at least $\frac{2}{3}|N|$ $\Votes$. Hence, the proposed block to get notarization will require $\Votes$ from at least $(\frac{1}{3}+1)|N|$ honest nodes. Let us call this set of $(\frac{1}{3}+1)|N|$ honest nodes as set $A$. Hence, the honest nodes in set $A$ will have longest notarized chains of same length that is at most one block shorter than any node's longest notarized chain. The reason behind one block shorter longest notarized chain for honest nodes as compared to any node's longest notarized chain is because a dishonest proposer might not send the notarization of the last notarized block to some or all of the nodes.  

In epoch $e'$ when a honest node's longest notarized chain does not grows in past $1min$, it will broadcast a $\Timeout$ message. When other honest nodes receive the $\Timeout$ message they will broadcast a $\Sync$ message containing their longest notarized chains. 

Considering node $i$ to be the last honest node to enter epoch $e$ during the period of synchrony at time $t$. Hence, all honest nodes must have entered epoch $e$ during the time interval $[t-\Delta, t]$. So, node $i$ at most by $t$ must have received $\frac{2}{3}|N|$ signatures from different voting nodes, required for it to move to epoch $e$ from the previous epoch. Then, the last $\Timeout$ message which will result in the accumulation of $\frac{2}{3}|N|$ signatures required must have been broadcasted in the time interval $[t-\Delta, t)$. Hence the $\Sync$ message from all honest nodes will get broadcasted at most by $t$. Hence, at most by $t+\Delta$ node $i$ and all other honest nodes will have a longest notarized chain of same length that is at most one block shorter than any node's longest notarized chain.
\end{proof}

\begin{lemma} \label{L22} 
Let us assume the following (a) more than $\frac{2}{3} |N|$ nodes are honest, (b) let $[t_{0}, t_{1}]$ denote a period of synchrony, (c) there are two consecutive epochs $e, e+1$ with honest proposers $P_{1}, P_{2}$ respectively. Suppose that honest proposer $P_{1}$ enters epoch $e$ at time $t \in [t_{0}+\Delta, t_{1}-(2sec+18\Delta+1min)]$, then for every honest node its finalized chain at time $t_1$ must be longer than its finalized chain at time $t$.
\end{lemma}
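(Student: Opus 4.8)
The plan is to chain together \cref{L20}, \cref{L14}, and \cref{L16} to show that once an honest proposer $P_1$ enters epoch $e$ at time $t$, and remains in that epoch long enough, $P_1$ will successfully propose a timeout block and three consecutive normal blocks, all of which become notarized in every honest node's view during the period of synchrony; by the finalization rule, the second normal block (the middle of the three consecutive normal blocks) then gets finalized. The subtlety — and the reason the lemma needs \emph{two} consecutive honest proposers $P_1, P_2$ — is that $P_1$ may enter epoch $e$ very late relative to when the other honest nodes entered it, so the honest nodes may time out and advance to epoch $e+1$ before $P_1$ can finish proposing four blocks. In that case we fall back to $P_2$: $P_2$ is honest, and when it enters epoch $e+1$ the honest nodes are freshly synchronized (they just advanced), so $P_2$ gets a ``clean'' epoch and can push through the four blocks.

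\textbf{Step 1 (synchronization on entry).} By \cref{L20}, when $P_1$ enters epoch $e$ at time $t$, within $\Delta$ time all honest nodes are in epoch $e$ with longest notarized chains whose lengths differ by at most one block. This gives the precondition needed to invoke \cref{L16}: the parent chain $P_1$ proposes from is at least as long as any honest node's longest notarized chain.

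\textbf{Step 2 (case split on whether $P_1$ gets enough time).} Consider whether $P_1$, after the mandatory $1sec$ wait, has at least enough time before any honest node times out — roughly $1min$ of quiescence is needed for a timeout, and each proposed block resets the \textit{NotarizationTimer} within $O(\Delta)$ by \cref{L16}(b)-(c).
\begin{itemize}
\item[(i)] If $P_1$ entered epoch $e$ early enough that honest nodes are at most $1sec + 10\Delta$ into epoch $e$ when $P_1$ starts proposing, then by \cref{L14} no honest node is ahead of epoch $e$, and repeatedly applying \cref{L16} — once for the timeout block and once for each of the three normal blocks, each proposal following the previous notarization within $O(\Delta)$ and keeping all honest nodes in epoch $e$ (Claim (c) of \cref{L16}) — yields four consecutive notarized blocks $\langle e,1\rangle,\langle e,2\rangle,\langle e,3\rangle,\langle e,4\rangle$ in every honest node's view. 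Blocks $\langle e,2\rangle,\langle e,3\rangle,\langle e,4\rangle$ are three consecutive normal blocks, so $\langle e,3\rangle$ and its prefix get finalized by the finalization rule, and this happens well before $t_1$ given the stated length $2sec+18\Delta+1min$ of the remaining period of synchrony.
\item[(ii)] Otherwise, the honest nodes time out and all enter epoch $e+1$; since $P_2$ is honest and the honest nodes re-synchronize upon entering $e+1$ (again by \cref{L20}, with \cref{F1} ensuring they all get there within $\Delta$), we are back in situation (i) with $P_2$ in place of $P_1$ and $e+1$ in place of $e$, and the same argument produces a finalized block.
\end{itemize}

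\textbf{Step 3 (timing bookkeeping).} Finally, add up the time bounds: the $1sec$ proposer wait, the at most $\approx 1min$ before a forced handover to $P_2$, another $1sec$ wait, and the $O(\Delta)$ per-block notarization delays across four blocks (bounded by the $11\Delta$-type increments in \cref{L16}), confirming everything fits inside $2sec+18\Delta+1min$ so that the new finalized block appears by $t_1$. \textbf{The main obstacle} is the bookkeeping in case (ii): one must argue carefully that the timeout-and-advance to epoch $e+1$ actually happens promptly (within roughly $1min+O(\Delta)$ of $t$), using the fact that once $P_1$'s chain stops growing the honest \textit{NotarizationTimer}s all fire within $\Delta$ of each other and the resulting \Timeout/\Sync exchange accumulates $\frac{2}{3}|N|$ signatures within $O(\Delta)$ — i.e., that a late or obstructive $P_1$ cannot stall the protocol longer than one timeout period — and then that $P_2$ inherits a genuinely clean epoch rather than one already close to timing out.
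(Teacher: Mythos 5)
There is a genuine gap: your proof identifies the wrong obstacle, and as a result the heart of the paper's argument is missing. Your case (i) claims that if $P_1$ has enough time, ``repeatedly applying \cref{L16}'' yields four notarized blocks. But \cref{L16} is not unconditional: its hypotheses (c) and (d) restrict how dishonest nodes may use \Sync messages to grow honest nodes' longest notarized chains, and an adversary is free to violate them. Concretely, a dishonest proposer of an earlier epoch may have withheld the notarization of the last block of the longest notarized chain (this is exactly why \cref{L20} only guarantees honest chains are within one block of \emph{any} node's chain), and it can release that notarization selectively to a subset $A$ of honest nodes just before or during $P_1$'s proposals. Then either $P_1 \notin A$ and the nodes in $A$ refuse to vote because $P_1$'s parent chain is too short, or $P_1 \in A$ and the nodes outside $A$ refuse to vote because their own chains are too short; since $A$ alone need not contain $\frac{2}{3}|N|$ nodes, notarization then hinges on dishonest votes, which the adversary can withhold at the worst moment (e.g., for the third normal block). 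So epoch $e$ can fail to finalize anything even though $P_1$ is honest and has ample time --- and this, not a ``late $P_1$,'' is why the lemma needs two consecutive honest proposers. Indeed, your motivating scenario cannot occur under the hypotheses: during a period of synchrony, the aggregated \Timeout broadcast on epoch entry puts all honest nodes into epoch $e$ within $\Delta$ of one another (cf.\ \cref{F1}), so $P_1$ cannot be far behind the other honest nodes.

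Consequently your case (ii) also does not carry the load it needs to. The crucial step in the paper's proof is showing that after the forced timeout in epoch $e$, the adversary's advantage is neutralized: the \Timeout/\Sync exchange equalizes all honest nodes' longest notarized chains, and the dishonest nodes have no further withheld notarization to replay the split in epoch $e+1$, so the clean argument (your case (i), the paper's Case (a)) applies to $P_2$. Your write-up asserts that $P_2$ ``inherits a genuinely clean epoch'' but gives no reason why the same selective-release attack cannot be repeated, which is precisely the point that must be proved. The timing bookkeeping in Step 3 would also have to be redone around this adversarial case analysis (the paper's bound $t+2sec+18\Delta+1min$ comes from the worst such case, where three blocks are notarized for $A$, the fourth is blocked, and everyone times out into epoch $e+1$), rather than around the proposer-lateness scenario.
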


\begin{proof}[Proof of Lemma \ref{L22}] 
\label{proofL12}
Now the honest proposer $P_{1}$ enters epoch $e$ at time $t$. All other honest nodes must enter epoch $e$ at most by time $t+\Delta$. By Lemma \ref{L20}, at most by time $t+2\Delta$ all honest nodes must have a longest notarized chain of same length that is at most one block shorter than any node's longest notarized chain.

It can be further divided into two cases: \textbf{Case (a)}, where even if the dishonest node has one extra notarization compared to  the honest node's longest notarized chain, the dishonest node will not release it; and \textbf{Case (b)}, where the dishonest node may send notarization for the last block it withheld selectively to some nodes.

\textbf{Case (a)}: In this scenario all honest nodes will have the same set of longest notarizatied chain at most by $t+2\Delta$.  At time $t+1sec$, honest proposer $P_{1}$ proposes a block. All honest nodes must receive the proposed block at most by time $t+1sec+\Delta$. Now the honest nodes will vote for the proposed block if it's longest notarized chain is as long as the parent chain of the proposed block. For any honest node $i$, until the time proposed block gets notarized, their longest notarized chain will not grow any longer than the parent chain of the proposed block because of two reasons(i) no new block gets notarized before the proposed block (ii) No $\Timeout$ message gets broadcasted since it is not more than $1min$ since either the honest node entered a new epoch or since the longest notarized chain grew. Hence, no $\Sync$ message get broadcasted, (iii) we assume dishonest node do not broadcast notarization of the last block if withheld. Hence, by the time any honest node $i$ receives the first proposed block, it's longest notarized chain must be same as the parent chain of the proposed block. Hence, all honest nodes will vote for the proposed block. By Lemma \ref{L16} the timeout block gets notarized for every honest node at most by $3\Delta$ time. Next every normal block proposed for the same reasons mentioned for the timeout block, will get notarized at most by $2\Delta$ (from Lemma \ref{L16}) for every honest node after the normal block gets proposed.

Hence, every honest node will see notarization of a timeout block and 3 normal blocks at most by $t+1sec+9\Delta$. Hence the finalized chain of every honest node will grow by at least one block at most by time $t+1sec+9\Delta$.

\textbf{Case (b)}: In this scenario, we assume dishonest node has one block longer notarized chain compared to  other honest nodes at least till $t+2\Delta$. Post $t+2\Delta$ a dishonest  node can choose to send notarization selectively to some of the nodes.  

To not let honest node's finalized chain grow, the dishonest node can delay releasing notarization of the last block it withheld at least to $(\frac{1}{3}|N|+1)$ honest voting nodes at most by $\Delta$ time before they start to vote. This is because of the reason once the timeout block in a epoch gets notarized for every honest node, post it even if dishonest node releases notarization, the notarized chain timeout block belongs to will be as long as any node's longest notarized chain.

Hence assuming, dishonest node releases notarization of the last notarized block it withheld to at least $(\frac{1}{3}|N|+1)$ honest voting nodes before they receive the proposed timeout block. Now let set $A$ represent set of honest nodes to which the dishonest proposer has released the notarization to. Let set $B$ represent the set of honest nodes to which the dishonest node has not released the notarization to. It can be further divided in two sub cases: \textbf{Case (b.1)}, where the honest proposer for current epoch is from set $B$, and \textbf{Case (b.2)}, where the honest proposer for current epoch is from set $A$.

\textbf{Case (b.1)}: In this scenario, no node from set $A$ will vote for the block proposal. Hence, all honest nodes will send broadcast a $\Timeout$ message at most by $t+2\Delta+1min$. So, all honest nodes will enter epoch $e+1$ at most by $t+3\Delta+1min$. Since nodes in set $A$ have longest notarized chain in existence and the by Lemma \ref{L16} all honest nodes will have the longest chain at most when they are $2\Delta$ deep in epoch $e+1$. Next, since dishonest node has no extra notarization with held, it can not grow the longest notarized chain of any honest node. Hence, like mentioned in \textbf{Case (a)}, at most by $1sec+9\Delta$ after a honest proposer enters a epoch, all honest node add at least one new finalized block to their finalized chain. Hence, at most by time $t+11\Delta+1min+1sec$ every honest node's finalized chain grows.

\textbf{Case (b.2)}: The honest proposer for current epoch is from set $A$. It can be further be divided in two sub cases \textbf{Case (b.2.a)} all the dishonest nodes vote for the proposed blocks so that the proposed blocks gets notarized and which leads ultimately for the finalized chain of honest nodes to grow, \textbf{Case (b.2.b)} some or all of the dishonest nodes do not vote, such that some or all the  proposed block do not get notarized or the dishonest node sends the notarization it wiht held from honest nodes, such that voting nodes receive it after the honest proposer of current epoch proposes a block and before the voting nodes have voted on the proposed block. Hence no honest node will see a new block added to it's longest notarized chain.

\textbf{Case (b.2.a)}: In this case, as show in \textbf{Case (a)} the finalized chain of nodes in set $A$ grows at most by $t+1sec+9\Delta$. The honest nodes in set $B$ will not vote for proposed blocks since their longest notarized was shorter than parent chain of the proposed block. The honest nodes in set $B$ will broadcast $\Timeout$ message at most by $t+2\Delta+1min$. The nodes in set $B$ will receive the longest notarized chain of all other nodes at most by $t+4\Delta+1min$ via $\Sync$ messages. Hence, the finalized of nodes in set $B$ will grow at most by $t+4\Delta+1min$.

\textbf{Case (b.2.b)}: In this case, the dishonest node can chose not to vote any time such that any of the honest nodes' finalized chain does not grows. For this to happen with maximum delay, the dishonest node can avoid voting for the third normal block but still vote for the timeout block and the two normal blocks post the time timeout block. Now from \textbf{Case (a)} we know for the first three block(timeout block, plus two normal blocks) of a epoch to get notarized takes $t+1sec+7\Delta$. Hence, nodes of set $A$ has a three block longer notarized chain compared to  the longest notarized chain of nodes of set $B$. Since the longest notarized chain of nodes in set $B$ does not grows, they will broadcast a $\Timeout$ message at most by $t+2\Delta+1min$. At most by time $t+3\Delta+1min$ all nodes will broadcast their longest notarized chain via $
Sync$ message. Hence, nodes in set $B$ will have their longest notarized chain same as the longest notarized chain for nodes in set $A$. Since dishonest nodes choose not to vote for the third normal block proposed, nodes in set $A$ will not receive a new notarization and will broadcast $\Timeout$ message at most by $t + 1sec + 7\Delta + 1min$. Hence all honest nodes should move to epoch $e+1$ at most by $t + 1sec + 9\Delta + 1min$. 

Now when the honest node enters epoch $e+1$, they all have the same set of longest notarized chains. And the dishonest node is not left with any new notarization's to selectively send to a some nodes. So, similar to \textbf{Case (a)}, the finalized chain of all honest nodes in epoch $e+1$ will grow at most by $t'+1sec+9\Delta$ (where $t'$ is the time when proposer of epoch $e+1$ enters epoch $e+1$), which is equivalent to $t+2sec+18\Delta+1min$.
\end{proof}

\begin{proof}[Proof of \cref{L23L} (Liveness Theorem)]
For every honest node's finalized chain to grow at time $t_1$ as compared to time $t_0$, there must be a proposer in period of synchrony which proposes at least three consecutive normal blocks. And by Lemma \ref{L23} and Lemma \ref{L22}  when this happens the finalized chain grows. And, if the proposer does not propose a block since past $1min$ or if the proposed block does not get notarized since past $1 min$, all honest nodes from $N$ set start broadcasting $\Timeout$ messages indicating their intention to move to the next epoch. A node moves to the next epoch, once the node receives $\frac{2}{3} |N|$ signatures from different voting nodes indicating their wish to move to the next epoch. Before moving to the next epoch without having to finalize a new block, the proposer can propose and get notarized at most three blocks (one time out block, two normal block). In the condition just mentioned, the proposer node can stay for $1sec + 7\Delta + 1min$ in the epoch after first entering the local epoch before moving to the next epoch. Now because of round robin proposer selection policy, also the honest nodes need to be more than $\frac{2}{3}|N|$ and the period of synchrony being of minimum length $(1sec+7\Delta+1min)\cdot(f)+4\cdot(4min+2\Delta)+(2sec+18\Delta+1min)$ , we are guaranteed at least two honest proposer for consecutive epochs in the period of synchrony and by Lemma \ref{L22}, it will lead to the addition of one new finalized block in every honest node's finalized chain at most by time $t_{1}-(2sec+18\Delta+1min)$.
Hence, the finalized chain of all honest nodes at time $t_{1}$ must be longer than time $t_{0}$. 
\end{proof}

\section{Detailed explanation for messages required to finalize a block}
\label{sec:formula}

\subsection{Chained HotStuff~\cite{yin2018hotstuff}}
\subsubsection{Normal mode}
broadcast - replica sends message to itself as well as everyone else 

\begin{enumerate}
    \item{Prepare phase:}
Proposer broadcast $N$ messages and and replicas vote. Requires minimum of $N-f$ votes. Hence total messages $2N-f$.
\end{enumerate}

{Total messages required to finalize a block in non failure mode:} $= 2N-f$

\subsubsection{Failure mode}
In the failure mode we assume leader does not commits anything in the Decide phase. Also upon timeout of a view, the replica sends a message containing generic quorum certificate to the leader of the next view. So, in total it will be $N-1$ messages.

total messages $= f\cdot(2N-f+ N - 1)+2N-f$ = $f\cdot(3N-f-1)+2N-f$

\subsection{ePBFT~\cite{ic_cyber_2019}}
 
\subsubsection{Normal mode}
\begin{enumerate}
\item{Pre-prepare phase:}
Proposer broadcast $N-1$ messages.

\item{Prepare phase:}
Replicas send back message to leader and in total $2f+1$ messages from the replicas are required by leader to make progress.

\item{Commit phase:}
Proposer broadcast $N-1$ messages.
\end{enumerate}

Total messages required to finalize a block $= 2N - 2 + 2f + 1 = 2(N+f) - 1$

\subsubsection{Failure mode}
In the failure mode for maximum messages we assume 
\begin{enumerate}
    \item leader sends Pre-prepare messages to only $2f$ replicas. - $2f$ messages.
    \item $2f$ replica reply to the leader with Prepare-1 message. - $2f$ messages.
    \item leader does not send any commit message.
    \item $2f$ replica broadcast Prepare-2 message. -  $2f(N)$ messages.
    \item timeout happens and replica move to the next view.
    \item let $k$ nodes ask for checkpoint message from $N-1$ neighbor nodes. so check point messages: $k \cdot (N-1)$

\end{enumerate}

Total messages $= f\cdot(2f(N+2))+2(N+f) - 1$ + $k \cdot (N-1)$ .

\subsection{PBFT~\cite{castro1999practical}} \label{pbft_label}
 
\subsubsection{Normal mode}

\begin{enumerate}
     \item{Pre-prepare phase:} Proposer broadcast $N-1$ message.

\item{Prepare phase:}
Replicas broadcast $(2f+1)(N)$ message.

\item{Commit phase:}
Proposer in addition to all voting replicas broadcast $N^2$ messages.
\end{enumerate}

{Total messages required to finalize a block: } $= N - 1 + (2f + 1)N + N^2 = N^2+2N(f+1) - 1$

\subsubsection{Failure mode:}

In the failure mode for maximum messages we assume 
\begin{enumerate}
    \item leader sends Pre-prepare messages to only $2f$ replicas. - $2f$ messages.
    \item $2f$ replicas in prepare mode broadcast message. - $2f(N)$ messages.
    \item leader does not send any commit message.
    \item timeout happens and replica move to the next view
    \item check point messages - a replica periodically at fixed time intervals broadcast  checkpoint message to other replicas. Each replica collects $(2f+1)$ check point messages. Let the number of times a replica  broadcast checkpoint messages be the constant $C$.
\end{enumerate}

Total messages $= f\cdot(2f + 2f(N)) + $C$ \cdot N(2f+1) + N^2+2N(f+1) - 1$.
\subsection{Sync HotStuff~\cite{abraham2020sync}} 

\subsubsection{Normal mode}

\begin{enumerate}
     \item{propose phase:}
Proposer broadcast $N-1$ messages.

\item{Vote phase:}
replica broadcast the proposal to other replica (assuming replica will not send to self and will not send to proposer) - $N-2$. Replica next broadcast vote - $N-1$ vote messages. 

Total $N$ voting replica, hence $(N-1)(N-1)+(N-1)(N-2)$ messages send.
\end{enumerate}

{Total messages required to finalize a block: } $= 2(N-1)^2$

\subsubsection{Failure mode}
In the failure mode, a a view is changed (a)when either a leader proposes less number of blocks in a certain time period, when this happens replica broadcast a blame message or (b) leader proposes equivocating blocks, when this happens leader broadcast the blame messages and as well as the two equivocating blocks 
\begin{enumerate}
    \item blame messages broadcast and equivocating blocks broadcast - $2(N-1)(N-1)$ messages.
    \item quit view - upon receiving $(f+1)$ blame message , broadcast them and quit view - $(N)(N-1)$ messages.
    \item move to next view and send a highest certified block to the new leader - $N-1$ messages.
    \item new leader broadcast new view message, containing highest certified block - $N-1$
    \item replica forward new - view message to all other replica's - $(N-1)(N-2)$
    \item vote - broadcast vote for highest certified block from new - view message - $(N-1)(N-1)$
\end{enumerate}

{Total messages required to finalize a block: }
Total messages $= f \cdot  (2(N-1)^2 + N(N-1) + (N-1) + 2(N-1)^2 )  + 2(N-1)^2 = f \cdot((N-1)(5N-3) )  + 2(N-1)^2$
\subsection{Tendermint ~\cite{buchman2016tendermint}}

Proposer Selection - is weighted( in context to voting ability) round robin. We assume equal weights for all nodes in context to voting ability.
\subsubsection{Normal mode}
\begin{enumerate}
    \item{PROPOSAL message and PREVOTE message} : Proposer broadcast a proposal message to $N-1$ nodes. And a replica upon receiving a PROPOSAL message broadcast a PREVOTE message.

\item{Upon receiving PROPOSAL message and $2f+1$ PREVOTE message} : A replica broadcast PRECOMMIT message. 

\item{Upon receiving PROPOSAL and $2f+1$ PRECOMMIT message} : A replica finalizes.
\end{enumerate}

{Total messages required to finalize a block: }$= N-1+4f+1 = N + 4f$ messages

\subsubsection{Failure mode}
 For worst case scenario, we assume a replica does not finalize upon receiving $2f+1$ PRECOMMIT message. 

Hence, total messages $= f\cdot(N + 4f)+N+4f$ $=f\cdot(N+4f)+N+4f$

\subsection{Pipelet}
 
\subsubsection{Normal mode}

\begin{enumerate}
     \item Proposer broadcast proposal message - $N-1$ which also contains votes for the parent block of the proposed block. When the voting node receive the proposal they notarize the parent block of the current proposed block.

\item{Voting:}
Nodes send vote back to proposer of current epoch. So, $N-1$ messages send back to the proposer of the current epoch.
Hence, the block gets notarized for the proposer.

\end{enumerate}
Total messages required to finalize a block : $2N-2$

\subsubsection{Failure mode:}

In the failure mode for worst case scenario we assume not able to finalize block for $f$ leaders
\begin{enumerate}
    \item Proposer sends proposal messages  $N-1$ messages.
    \item Voting node send back their votes to proposer and next the proposer does nothing. In all proposer receives $N-1$ votes.
    \item Next due to no progress being made, nodes start to broadcast clock messages. Total clock messages broadcasted $N \cdot (N-1)$
\end{enumerate}

Total messages $= f \cdot (N^2+N-2) + 2N-2$.
\\
\section{Notation}
\label{sec:notation}

\begin{table}[h!]
\centering
\caption{Notation}
\label{tab:notation}
\begin{tabular}{ | p{0.42\columnwidth} | p{0.66\columnwidth} |  } 
  \hline
 Symbol & Description  \\ 
  \hline
  \hline
  \multicolumn{2}{|c|}{Constants} \\
  \hline
  $N$ & set of nodes  \\ 
  \hline
   $G^*$ & genesis block  \\ 
  \hline
  \hline
  \multicolumn{2}{|c|}{State of a Node} \\
  \hline
  $e$ & current epoch \\ 
  \hline
  $s$ & current sequence number \\ 
  \hline
  $\textit{EpochTimer}$ & wall clock timer to measure elapsed real time since reset; used by proposer \\
  \hline
  $\textit{NotarizedTimer}$ & wall clock timer to measure elapsed real time since reset; used by voting nodes \\
  \hline
  $\textit{TimeoutSignatures}(e)$ & set of nodes from which $\Timeout \langle e', \Signature \rangle$ messages have been received for epoch $e'$\\
  \hline
  $\textit{Votes}(B)$ & set of nodes from which $\Vote \langle B, \Signature \rangle$ messages have been received for block $B$\\
  \hline
  $\textit{Notarized}$ & set of notarized blocks \\
  \hline
  $\textit{Longest}$ & set of longest notarized blocks \\
  \hline
  $\textit{Finalized}$ & set of finalized blocks \\
  \hline
  \hline
  \multicolumn{2}{|c|}{Messages} \\
  \hline
  $\Timeout\langle e', \Signature \rangle$ & indicates willingness of a node to move to epoch $e'$ \\
  \hline
  $\Vote\langle B, \Signature\rangle$ & indicates a node voting for block $B$ \\
  \hline
  $\Proposal\langle B', \textit{Votes}[B],$ $\Signature\rangle$ & proposes a new block $B'$ and along with it sends aggregate votes for block $B$ and $sign$ for current epoch \\
  \hline 
   $\Sync\langle \textit{Chains}\rangle$ & broadcast the \textit{Longest} set for the current node \\
  \hline
\end{tabular}
\end{table}

\end{appendix}

\end{document}